\documentclass[a4paper,11pt]{article}

\usepackage{fullpage}
\usepackage{xcolor}
\usepackage{amssymb}
\usepackage{amsmath}
\usepackage{subcaption}
\usepackage{framed}
\usepackage{footmisc}
\usepackage[round,authoryear]{natbib}
\citestyle{authordate}
\usepackage{authblk}
\usepackage{colortbl}
\definecolor{webgreen}{rgb}{0,0.4,0}
\definecolor{webbrown}{rgb}{0.6,0,0}
\definecolor{purple}{rgb}{0.5,0,0.25}
\definecolor{darkblue}{rgb}{0,0,0.7}
\definecolor{darkred}{rgb}{0.7,0,0}
\usepackage{hyperref}
\hypersetup{colorlinks,citecolor=darkblue,filecolor=black,linkcolor=darkred,urlcolor=webgreen,pdfpagemode=UseNone,
pdfstartview=FitH}
\newcommand{\ignore}[1]{}

\usepackage{amsthm}
\newtheorem{lemma}{{\sc Lemma}}

\newtheorem{corollary}{{\sc Corollary}}
\newtheorem{theorem}{{\sc Theorem}}
\newtheorem{definition}{{\sc Definition}}

\newtheorem{claim}{{\sc Claim}}

\sloppy
\usepackage{cleveref}
\crefname{claim}{claim}{claims}
\crefname{fact}{fact}{facts}
\crefname{algorithm}{algorithm}{algorithms}
\crefname{observation}{observation}{observations}
\crefname{equation}{equation}{equations}
\crefname{assumption}{assumption}{assumptions}

\DeclareMathOperator*{\argmax}{\arg\!\max}

\usepackage{pgf}
\usepackage{verbatim}
\usepackage{enumerate}
\usepackage{amssymb}
\usepackage{mathrsfs}
\usepackage{algorithm}
\usepackage[noend]{algorithmic}
\usepackage{resizegather}
\usepackage{url}

\usepackage{xspace}

\usepackage{nicefrac}

\usepackage{enumitem}
\usepackage{multirow}
\usepackage{dsfont}

\newcommand{\mechabbrv}{\text{\tt ECATS}} 
\newcommand{\mech}{{\bf \texttt{E}}galitarian and {\bf \texttt{C}}ongestion {\bf 
\texttt{A}}ware {\bf \texttt{T}}ruthful {\bf \texttt{S}}lot allocation}
\usepackage{cleveref}
\crefname{claim}{claim}{claims}
\crefname{fact}{fact}{facts}
\crefname{algorithm}{algorithm}{algorithms}
\crefname{observation}{observation}{observations}
\crefname{equation}{equation}{equations}
\crefname{assumption}{assumption}{assumptions}
\crefname{hypothesis}{hypothesis}{hypotheses}
\usepackage{enumerate}
\usepackage{subcaption}
\usepackage[noend]{algorithmic}
\usepackage{resizegather}
\usepackage{enumitem}
\usepackage{multirow,bigdelim}
\usepackage{cancel}
\usepackage{framed}
\usepackage{wrapfig}
\usepackage{tfrupee}
\usepackage{multirow}
\usepackage{footmisc}

\renewenvironment{itemize}{
  \begin{list}{\textbullet}{
    \setlength{\leftmargin}{1.5em}
    \setlength{\itemsep}{0.25em}
    \setlength{\parskip}{0pt}
    \setlength{\parsep}{0.25em}
  }
}{
  \end{list}
}

\title{\bf Egalitarian and Congestion Aware Truthful Airport Slot Allocation Mechanism}

\author[1]{Aasheesh Dixit$^*$}
\author[2]{Garima Shakya$^*$}
\author[1]{Suresh Kumar Jakhar}
\author[2]{Swaprava Nath}

\affil[1]{\small Indian Institute of Management Lucknow, \texttt{\{fpm18021, skj\}@iiml.ac.in}}
\affil[2]{\small Indian Institute of Technology Kanpur, \texttt{\{garima, swaprava\}@cse.iitk.ac.in}, $^*$: Equal contribution}

\date{}

\begin{document}
\maketitle

\begin{abstract}
\noindent
We propose a mechanism to allocate slots fairly at congested airports. This mechanism: (a) ensures that the slots are allocated according to the {\em true} valuations of airlines, (b) provides fair opportunities to the flights connecting remote cities to large airports, and (c) controls the number of flights in each slot to minimize congestion. The mechanism draws inspiration from economic theory. It allocates the slots based on an {\em affine maximizer} allocation rule and charges payments to the airlines such that they are incentivized to reveal their true valuations. The allocation also optimizes the occupancy of every slot to keep them as uncongested as possible.
The formulation solves an optimal integral solution in strongly polynomial time. We conduct experiments on the data collected from two major airports in India. We also compare our results with existing allocations and also with the allocations based on the International Air Transport Association (IATA) guidelines. The computational results show that the {\em social utility} generated using our mechanism is 20-30\% higher than IATA and current allocations. 
\end{abstract}

\smallskip \noindent
{\bf Keywords:} airport slot allocation, congestion cost, social welfare, mechanism design, strongly polynomial algorithm.

\section{Introduction}
In the last two decades or so, an overwhelming increase in demand for air transportation coupled with political, physical, and institutional constraints for capacity expansion resulted in the congestion and delays at the world's major commercial airports. As demand for airport slots \footnote{The daily runway scheduling period of an airport is divided into time intervals of a fixed length (e.g., 15 minutes) called slots \citep{androutsopoulos2020modeling}. The slot allocation to an airline movement means the permission to use the airport for landing or taking-off within a particular time interval (slot).  } exceeds its available capacity, it results in more holding time for the permission to land or take-off, which leads to congestion. Airport congestion is veritably imposing a tremendous cost on the world economy which includes additional aircraft operating costs, passenger delay costs, etc. The other externalities are environmental and noise pollution around the congested airport, while aircraft wait in queue with engines fired up. A study commissioned by the Federal Aviation Administration (FAA) estimates the total cost of delay to-be around \$31.2 billion for the calendar year 2007 \citep{deshpande2012impact}. The contemporary research on airport slot allocations is primarily focused on demand-side solutions to mitigate congestion as it has the potential to restore the demand-capacity balance over a medium to short time horizon with fairly low investments \citep{barnhart2004airline}. The demand management strategies to manage slots at congested airport ranges from various administrative tools to market-based mechanisms.

The administrative approaches for slot allocation include grandfather rights, first-come-first-serve, lotteries, new entrant rule, etc.  \citep{ball2006auctions}. These approaches essentially are government or institutional interventions through rules and regulations to allocate scarce slots at congested airports. In the grandfathering rights, the historical precedence of the allocated slots is maintained \citep{sieg2010grandfather}. Whereas, in the first-come-first-serve rule, the slots are assigned based on their time of arrival, and airlines queue up for runway and gate access \citep{fan2002practical}. In the new entrant rule, the preference is given to new entrants for the vacant/newly available slots. For example, a considerable number of slots at the newly constructed runway at Frankfurt airport were allocated to new entrants \citep{IATA}. The use of administrative instruments may also be warranted to achieve certain other social goals. For example, governments across the world have used a varied set of administrative instruments to promote remote connectivity. Indeed, air transportation has emerged as a `lifeline service' to connect remote cities where land transportation is not a real option \citep{fageda2018air}. The link between air connectivity and economic growth is also well established \citep{brueckner2003airline,alderighi2017hidden}. The study by \citet{bilotkach2015airline} shows a strong correlation between regional growth and connectivity to various new destinations. \citet{fageda2018air} observe that an appropriate mechanism can help governments effectively promote regional air transport, which otherwise is excluded under normal market conditions. An administrative approach may also respond to a country's varied political goals, for example, UK governments' objective to connect Heathrow airport to the rest of the UK \citep{burghouwt2017influencing}.

The extant literature also points out several issues with administrative approaches, with one of the biggest criticisms being that they are economically inefficient. The runway resource at congested airports in peak periods is valuable and scarce, and grandfathering or lotteries are inefficient and random ways to allocate them \citep{zografos2003critical,zografos2013decision}. Therefore, the slots should be allocated to the airlines which value them the most \citep{cao2000value}. Another criticism for grandfathering rule is that it creates entry barriers to new airlines \citep{vaze2012modeling} and encourages legacy carriers to overschedule flights to avoid losing the allocated slots \citep{harsha2009mitigating}. This also prevents the effective competitive pressure on these incumbents from the new entrants, especially the low-cost carriers. For example, British Airways holds 51\% of the take-off and landing capacity at London Heathrow (the busiest airport in Europe) with an estimated worth of 742 million pounds \citep{horton_2020}. This airport operates at its 100\% capacity, which makes it incredibly difficult for the new entrant to obtain a slot. Moreover, when there are no monetary payments involved in slot allocation, the airlines may overdemand the slots to minimize deviation from their preferred schedule  \citep{vaze2012modeling} or gain market power through slot hoarding  \citep{sheng2019modeling}. Historically it has been established that government interventions lead to inefficiencies, and airlines may exploit the loopholes in the regulations \citep{burghouwt2017influencing}.

Researchers have shown that market-based mechanisms result in the efficient allocation of scarce resources \citep{zhang2020truthful, abdulkadirouglu2003school}. \citet{czerny2014airport} concluded that the market mechanisms would essentially reveal the resources' true economic value at the congested airport. They provide a flexible and transparent approach for balancing supply-demand mismatch through pricing. These mechanisms also provide equal opportunities to the legacy carriers and the new entrant and promote healthy competition. Various market-based mechanisms such as congestion pricing, auctions, and secondary trading of slots have been proposed. Congestion pricing favors charging airlines a fee based on several proposed rules such as: 
\begin{itemize}
    \item  Pricing based on the marginal cost of delays \citep{vickrey1969congestion,carlin1970marginal}.
    \item Differential pricing across carriers by considering small and large airlines \citep{brueckner2009price}.
    \item Pricing based on contribution to the infeasibility of the ideal solution \citep{castelli2012airport}.
    \item Pricing based on types of passengers when airlines price discriminates \citep{czerny2014airport}.
    \item	Pricing based on the desired amount of delay that airlines are ready to buy \citep{mehta2020incentive}. 
\end{itemize}    
Congestion pricing results in considerable welfare gains, which can minimize total delay and attain ideal slot allocations \citep{daniel2009pricing,czerny2010airport}. \citet{daniel2011congestion} reported that congestion pricing could help in saving of \$72 million to \$105 million annually at Canadian airports by reducing delays and associated costs. A drawback of congestion pricing is that the fee has to be iteratively varied in time and among different airports depending on the degree of congestion set up by the airport administration. The other market-based approach widely discussed in the extant literature is auctioning, where airline bids for slots. An auction-based mechanism maximizes the social welfare as it allocates slots purely based on valuation maximization \citep{harsha2009mitigating}. \citet{basso2010pricing} compared congestion pricing with slot auctions and conclude that there is no clear winner for airport profit maximization. \citet{ball2020quantity} proposed a quantity-contingent-based auction mechanism to allocate the slots at the congested airports. The authors imposed a constraint on the accepted bids to control the market power of airlines. The authors proposed using Vickrey-Clarke-Groves (VCG) payment; however, they failed to prove the individual rationality and computational complexity of the problem.

Despite being regarded as the most efficient way to allocate slots at congested airports, the market-based mechanisms also have certain shortcomings. They are considered detrimental for flights from remote cities due to their low valuations compared to flights from big cities. The low profitability and inconsistent load factor of movements from remote cities may limit their ability to win slots at an auction or pay pure market-based congestion prices. Any mechanism solely based on the transfer of money will be unfavorable to remote communities and will lead to the exclusion of air-service to these cities \citep{green2007airports,harsha2009mitigating,sheard2014airports}. The use of administrative instruments may be warranted to achieve social objectives by ensuring connectivity to peripheral regions and support the population of remote regions. The need to combine market-based and administrative instruments leads to a hybrid mechanism, which allocates the slots based on efficiency goal (valuation maximization) and ensures slot opportunities for flights connecting to remote cities. This paper addresses this dual goal of valuation-based slot allocation with remote city connectivity (social objective) at the congested airport.

The key lever for congestion minimization is to limit the number of allocated movements (landings/take-offs) in a particular slot. The most important decision here is to determine the number of movements to assign based on the trade-off between the cost of delay and resource utilization. In our proposed mechanism, we determine the number of movements in each slot based on the trade-off between an increase in valuations due to additionally allocated movement and the resultant increase in congestion cost. It considers a proper balance between flight delays and the extent of the services offered (the number of flights scheduled). As the number of movements scheduled in each slot decreases, it would decrease congestion level and resultant delays. Based on a case study of LaGuardia Airport, \citet{li2010optimal} found that airport congestion can be minimized by limiting the number of allocated movements in a slot. Similarly, \citet{swaroop2012more} found that more than two-thirds of the total system-delays can be reduced by capping the slot allocation. We propose that if the number of movements in a slot is allocated up to a certain limit, it may not result in significant congestion and delays. However, if we allocate movements in a slot beyond a certain limit, it will start adding congestion and delays. Therefore, we have imposed a penalty in congestion cost for every additional allocated movement beyond a limit. The determination of allocation limit depends on various factors such as weather conditions in a particular season, aircraft mix and skill of the air traffic controllers and pilots, and other factors. Historical airport data can be used to determine these allocation limits.

\subsection*{Our Contributions}
The goal of this paper is to devise a slot allocation mechanism that can integrate various (administrative and market-based) instruments into an overall slot allocation strategy that satisfies multiple criteria such as congestion mitigation, weighted efficiency \footnote{Efficiency usually implies maximizing the sum of allocated valuations, which can be biased towards flights from metro cities. The weighted efficiency essentially means the sum of weighted valuations with larger weights for the flights from remote cities to provide a fair opportunity in the allocation}, and truthfulness (considering that the airlines' true valuation is their private information). It is worth noting that there is a dearth of models in the extant literature that can simultaneously consider the efficiency and social welfare goals in slot allocation. In this paper, we propose a mechanism with the following distinctive features:
\begin{itemize}
    \item It allocates slots based on the reported valuations of the airlines.
    \item It incorporates the remote city opportunity factor (social progress index and population of the remote city with adjustable weights), which caters to social obligations for slot allocation.
    \item It incorporates congestion cost and controls the number of movements allocated in each slot by considering the cost-benefit trade-offs. Controlling the number of allocations can be viewed as a mechanism for managing flight arrival and departure delays. The decrease in the number of allocations in each slot would decrease congestion level and resultant delays.
    \item Our mechanism provides a delicate balance between three competing goals: slot allocation based on valuations, remote city connectivity, and congestion mitigation, yet ensures computational tractability. Usually, these objectives counter are conflicting in nature.
\end{itemize}
In particular, we show the following results.
\begin{itemize}
 \item Our mechanism is truthful in dominant strategies (\Cref{thm:DSIC}). 
 \item It also incentivizes voluntary participation of the airlines (\Cref{thm:IR}). 
 \item The slot allocation problem is generally an integer program known to belong to a computationally intractable class (NP-complete). However, we show that our formulation is solvable in strongly polynomial time (\Cref{thm:Strongly_Polynomial}).
 \item The experiments (\Cref{sec:experiments}) show that the social utility generated using our mechanism is $20-30\%$ higher than IATA and Current allocations. The individual utility of movements is higher than IATA and Current allocation, while the payment made by movement is based on the type of connection provided by the flight movement.
\end{itemize}
To the best of our knowledge, this is the first mechanism that simultaneously considers multiple aspects for slot allocation, which we call \mech\ (\mechabbrv) mechanism. The remainder of the paper is organized as follows: The model and the desirable properties are formalized in \Cref{sec:model}. The mechanism and its theoretical properties are presented in \Cref{sec:mechanism,sec:TheoreticalResults} respectively. We present the experimental results in \Cref{sec:experiments}. We conclude the paper in \Cref{sec:concl}.

\section{The Model}
\label{sec:model}
 
We divide the availability of the airport in disjoint time intervals or {\em slots} \footnote{We find that the term ‘windows’ and ‘slot’ are used interchangeably in the literature. Following the definition of slot provided by \citet{androutsopoulos2020modeling},  \citet{mehta2020incentive} and \citet{ribeiro2018optimization}, we refer to each disjoint time interval as a slot with defined capacity.}. Let the set of time slots be $S=\{1,2,\dots,n \}$. Each slot $j \in S$ has a capacity $C_{j}$, which is the maximum number of flights or movements that can be accommodated simultaneously,  $C_{j} \in \mathbb{Z}_{+}, \forall j \in S$. Let $M=\{ 1,2,\dots,m\}$ be the set of movements. Every movement has a potentially different valuation for the different slots in $S$. The valuation of movement $i \in M$ for slot $j \in S$ is denoted by $v_{ij}\in \mathbb{R}_{\geqslant 0}$. For every movement $i \in M$, $v_i$ be the vector of $i$'s valuations for every slot in $S$ and $V=[v_{ij}],{i \in M, j \in S}$.  We represent the set of all feasible allocations of the flights to the slots as $A=[x_{ij}],{i \in M, j \in S}$, where, $x_{ij} = 1$ if movement $i$ is assigned slot $j$ and is equal to $0$ otherwise. We assume that each movement can be assigned to at most one slot. 

We will use the shorthand $v_i(A)$ to denote the valuation of movement $i$ in the allocation $A$, i.e., it will be equal to $v_{ik}$, if $i$ is assigned slot $k$ in $A$, and zero if $i$ is unallocated in $A$. We assume that the valuation vector $v_i$ is private information of the agent \footnote{We use the terms agent, player, and movement interchangeably in this paper.} $i$, for all $i \in M$. 

The planner decides the allocation $A$ and charge payments $p = (p_i, i \in M)$ to each of the movements. We assume that every agent wants a more valued slot to be assigned to her and also wants to pay less. Therefore, the net payoff of an agent is assumed to follow a standard {\em quasi-linear form}~\citep{SL08}
\begin{equation}
    \label{eq:utility}
    u_i((A,p), v_i) = v_i(A) - p_i.
\end{equation}
Denote the set of all allocations by ${\cal A}$ and $p_i \in \mathbb{R}, \forall i \in M$.
The planner does not know the true valuations of the agents. Therefore he needs the agents to report their valuations and decide the allocation and the payments. This leaves the opportunity for an agent to misrepresent her true valuation, e.g., an agent can report a higher valuation to get prioritized scheduling. To distinguish, we use $v_{ij}$ for the true valuation and $v'_{ij}$ for reported valuations. We will use the shorthand $v = (v_i)_{i \in M}$ to denote the true valuation profile represented as an $m \times n$ real matrix, and $v'$ to denote the reported valuation profile. The notation $v_{-i}$ denotes the valuation profile of the agents except $i$.
The decision problem of the planner is, therefore, formally captured by the following function.
\begin{definition}[Airline Scheduling Function (ASF)]
 An {\em airline scheduling function (ASF)} is a mapping $f : \mathbb{R}^{m \times n} \to {\cal A} \times \mathbb{R}^m$ that maps the reported valuations to an allocation and payment for every agent. Hence, $f(v') = (A(v'), p(v'))$, where $A$ and $p$ are the allocation and payment functions respectively \footnote{We overload the notation $A$ and $d$ to denote both functions and values of those functions, since their use will be clear from the context}.
\end{definition}

We assume that when the number of movements in a slot exceeds a pre-defined threshold, congestion and delays start kicking in. We propose a division of the slot capacity into two parts: congestion-free and congestion-prone. If the number of allocated movements in a time slot $j$ exceeds a defined threshold, $(1-\lambda)C_j$, we say that the slot $j$  is congested. The determination of the threshold, $\lambda$, depends on various factors such as airport infrastructure and local weather conditions. Historical data can be used to determine this allocation limit. 

Therefore, the following term captures the congestion level in the slot corresponding to the allocation $A$ and is equal to the number of allocated flights in the congestion-prone capacity.
    \begin{equation}\label{eq:e_j_definition}
        e_j(A)= {\left(\sum_{i \in M} x_{ij} - C_{j}(1- \lambda)\right)}^+ \quad \forall j \in S
    \end{equation}
Our results hold for $e_j$ as any linear function of $x$. 

\subsection{Remote City Opportunity Factor}
Inadequate air connectivity to metro cities is considered a major obstacle for the local economic development of remote cities. It is shown that poor air connectivity services inhibit local employment growth by limiting the city's attractiveness for new businesses and reducing the viability of existing businesses \citep{brueckner2003airline}. In the existing literature, empirical studies from European Union, United States, and China have established a positive impact of metro airport connectivity on remote cities' regional growth \citep{Brafman2003, yao2008airport}. A study based on airports in Canada shows an increase of 1126 additional air-travel passengers can create one person-year of employment \citep{benell1993regression}. \citet{yao2008airport} showed that a $10\%$ increase in population density could lead to a $1.7\%$ increase in air passenger volume in China. Therefore, the remote cities' adequate population is also an important criterion to make air connectivity from remote cities economically viable. By considering the above two factors (economic progress and population), we propose the following remote city opportunity factor (RCOF): 
\begin{equation}
    \rho_i=\alpha_{i}\frac{\gamma_{\max} - \gamma_{i} +\delta}{\sum_{i \in M} \{\gamma_{\max} -\gamma_{i}\} +\delta} + (1-\alpha_{i}) \frac{\omega_i - \omega_{\min} +\delta}{\sum_{i \in M}\{ \omega_{i} - \omega_{\min}\} +\delta}
    \label{eq:rcof}
\end{equation}
where $\gamma$ \text is the social progress index (SPI)\footnote{(\url{https://www.socialprogress.org/}} and $\omega$ \text is the population of the city.
$\gamma_{\max}$ takes the maximum value of SPI  and $\omega_{\min}$ is the minimum population among all the cities in consideration. Here $\alpha_i$ indicates the relative weight assigned to SPI and population to calculate $\rho_i$. A small factor $\delta$ with $\lim_{\delta \to 0}$, is added to avoid a division-by-zero error.The value of $\rho_i$ ranges between $0$ to $1$.

\par The SPI is an innovative way to measure the development of a region. It is a widely used index measured by the thirty-five indicators related to basic human needs, foundations of well-being, and opportunities for the city's progress. The framework is closely coherent with all sustainable development goals (SDGs) parameters. This simple but rigorous framework makes it an invaluable proxy measure of SDG performance. It captures a wide range of measures involving social and environmental factors, thus proving to be a monitoring and guiding mechanism for national policy decisions and assisting businesses in planning corporate social responsibility activities. The SPI values for various countries and their cities are available at \url{https://www.socialprogress.org}. For India, the social progress performance data of $562$ districts is collected and maintained by the Institute for Competitiveness, India. As can be seen from the above equation, the formulation is designed to prioritize the cities with a lower value of SPI and a high population to benefit from air connectivity.

\subsection{Desirable Properties}

In this section, we formally define a few desirable properties that an ASF should satisfy. Since the mechanism can only access the movements' reported values, for a truly efficient slot allocation, it is needed that the reported valuations must be the true values. The following property ensures that every movement is incentivized to reveal the values truthfully.
\begin{definition}[Dominant Strategy Truthfulness] An ASF $f=(A(\cdot), p(\cdot))$ is {\em truthful in dominant strategies} if for every $v_i, v'_{i} \in \mathbb{R}^{n}_{\geqslant 0}, i \in M$ \begin{equation*}
    v_i(A(v_i,v_{-i}))-p_i(v_i,v_{-i}) \geqslant v_i(A(v'_{i},v_{-i}))-p_i(v'_{i},v_{-i}).
\end{equation*}
\end{definition}
The inequality above shows that if the true value of agent $i$ is $v_i$, the allocation and payment resulting from reporting it `truthfully' maximizes her payoff {\em irrespective of the reports of the other agents}.

The following property ensures that it is always weakly beneficial for every rational agent to participate in such a mechanism.
\begin{definition}[Individual Rationality]
An ASF $f=(A(\cdot), p(\cdot))$ is {\em individually rational} if for every $v$, and $i \in M$
\begin{equation*}
    v_i(A(v_i,v_{-i}))-p_i(v_i,v_{-i}) \geqslant 0.
\end{equation*}
\end{definition}

For a large airport, the number of movements and slots and capacities of slots is large; this leads to an exponential increase in the size of ${\mathcal A}$. For such a setting, the allocations and payments are desired to be computed in time polynomial in the number of movements and slots. We consider mechanisms that are {\em strongly polynomial} as defined below.
\begin{definition}[Strongly Polynomial]
The algorithm runs in strongly polynomial time if  \citep{grotschel1993complexity}
\begin{enumerate}
 \item the number of arithmetic operations(addition, subtraction, multiplication, division, and comparison) in the arithmetic model of computation is bounded by a polynomial in the number of integers in the input instance; and
 \item the space used by the algorithm is bounded by a polynomial in the input size.
\end{enumerate}
\end{definition}
In the following section, we introduce the proposed mechanism.

\section{The Proposed Mechanism}
\label{sec:mechanism}

We propose the ASF \mechabbrv\ (\mech) in the quasi-linear (QL) environment. 
The allocation function uses the airlines' reported valuations for slots and the RCOF to find a socially egalitarian allocation where each movement is weighted with their RCOFs. It also puts an additive penalty for congestion and maximizes this affine sum. The following integer linear program, therefore, computes the optimal allocation.

\begin{equation}
\begin{split}
\argmax_{x} \quad & \displaystyle\sum_{j \in S} \  \sum_{i \in M}  (\rho_i \ v_{ij}\ x_{ij})\ \  -\ \  e_j(x)\ g  \\
\text{subject to} \quad & \displaystyle\sum_{i \in M} x_{ij} \leqslant C_{j} \quad \forall j \in S,\\ & \displaystyle\sum_{j \in S} x_{ij} \leqslant 1  \quad \forall i \in M\\ 
&  x_{ij} =\{ 0,1\}  \quad \forall i \in M\  \ \forall j \in S
\end{split}
\label{eq:OPT1}
\end{equation}

The first term of the objective function multiplies each movement's valuation with their RCOFs ($\rho_i$s) to equalize the opportunities to the flights to and from every city. To cater to the congestion problem, the second term of the objective function subtracts the total congestion cost.
The first constraint in the above optimization problem is the capacity constraint of each slot. The second constraint ensures that none of the movements are assigned to more than one slot.

\paragraph{Allocation} 
The allocation function $\mathcal{A}(\cdot)$ of \mechabbrv\ computes the LP relaxation of IP~\ref{eq:OPT1} as follows.

\begin{equation}
\begin{split}
\argmax_{x} \quad & \displaystyle\sum_{j \in S} \  \sum_{i \in M}  (\rho_i \ v_{ij}\ x_{ij})\ \  -\ \  e_j(x)\ g  \\
\text{subject to} \quad & \displaystyle\sum_{i \in M} x_{ij} \leqslant C_{j} \quad \forall j \in S,\\ & \displaystyle\sum_{j \in S} x_{ij} \leqslant 1  \quad \forall i \in M\\ 
&  x_{ij} \geqslant 0  \quad \forall i \in M\  \ \forall j \in S
\end{split}
\label{eq:LP_relaxation}
\end{equation}

In \Cref{sec:TheoreticalResults}, we prove that the solution of the above LP will always be integral and therefore coincides with the solution of IP~\ref{eq:OPT1}.

\paragraph{Payment}

The payment function $p$ in \mechabbrv\ is given by,
\begin{equation}
p_i(v_i,v_{-i}) = \begin{cases}
\frac{1}{\rho_i} \Bigg( h_i(v_{-i}) - \Big(\sum\limits_{k \in M \setminus \{i\}} \rho_k\ v_k\left(\ A(v_i,v_{-i}) \ \right) - g\sum\limits_{j \in S}\ e_j(A(v_i,v_{-i}))  \Big)\Bigg) &\rho_i >0\\
0&\rho_i =0
\end{cases}
\label{eq:payment}
\end{equation}
where, \begin{equation*}
    h_i(v_{-i}) =\sum_{k \in M \setminus \{i\}} \rho_k\ v_k\big(A(v_{-i}) \big) - g\sum_{j \in S}\ e_j(A(v_{-i}) ) 
\end{equation*}
The payment function for an airline $i$ is proportional to the difference between the value of the optimal objective function when $i$ is absent and present, respectively. The payment is inspired by the idea of marginal contributions in the affine maximizers \citep{roberts1979characterization}.

\section{Theoretical results}
\label{sec:TheoreticalResults}
In this section, we present the theoretical guarantees for the properties of \mechabbrv. For better readability, some of the proofs are deferred to the appendix.

Our first result shows that under \mechabbrv, none of the players can get better utility by misreporting her true information.

\begin{theorem} \label{thm:DSIC} \mechabbrv\ is dominant strategy truthful.
\end{theorem}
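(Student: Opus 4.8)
The plan is to recognize \mechabbrv{} as an affine-maximizer mechanism with Groves-style payments and to invoke the standard argument showing that such mechanisms are dominant-strategy truthful. Concretely, define for each agent $i$ and each allocation $A \in \mathcal{A}$ the weighted welfare functional
\begin{equation*}
W_i(A, v) = \sum_{k \in M \setminus \{i\}} \rho_k\, v_k(A) - g\sum_{j \in S} e_j(A),
\end{equation*}
and observe that the allocation rule $\mathcal{A}(\cdot)$ selected by IP~\ref{eq:OPT1} (equivalently LP~\ref{eq:LP_relaxation}, using the integrality that is proved later in \Cref{sec:TheoreticalResults}) maximizes $\sum_{k \in M} \rho_k\, v_k(A) - g \sum_{j \in S} e_j(A)$ over $A$. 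Since all $\rho_k \geq 0$, this is a legitimate affine maximizer.

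Next I would fix an arbitrary agent $i$ with true valuation $v_i$, an arbitrary report $v_i'$, and an arbitrary profile $v_{-i}$ of the others, and split into the two cases of the payment definition. If $\rho_i = 0$, then $p_i \equiv 0$ and, because $\rho_i v_i(A) = 0$ for every allocation, agent $i$'s valuation term never enters the objective of IP~\ref{eq:OPT1}; hence $v_i(A(v_i, v_{-i})) \geq v_i(A(v_i', v_{-i}))$ need not even hold, but the \emph{utility} $v_i(A) - 0 = v_i(A)$ — wait, this needs care: when $\rho_i=0$ the agent still has a real valuation $v_i(A)$, yet pays nothing and the mechanism ignores her report entirely, so $A(v_i,v_{-i}) = A(v_i',v_{-i})$ as the allocation does not depend on $i$'s report at all (the objective is independent of $v_i$ when $\rho_i=0$), giving equality of utilities and hence truthfulness trivially. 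For the main case $\rho_i > 0$, substitute the payment \eqref{eq:payment} into the quasi-linear utility \eqref{eq:utility}: the $h_i(v_{-i})$ term and the factor $1/\rho_i$ combine so that
\begin{equation*}
u_i\big((A(v_i',v_{-i}), p_i), v_i\big) = v_i(A(v_i',v_{-i})) - \frac{1}{\rho_i} h_i(v_{-i}) + \frac{1}{\rho_i} W_i(A(v_i',v_{-i}), v).
\end{equation*}
Multiplying through by the positive constant $\rho_i$ and dropping the report-independent term $h_i(v_{-i})$, maximizing $i$'s utility over her report is equivalent to maximizing $\rho_i v_i(A(v_i',v_{-i})) + W_i(A(v_i',v_{-i}), v)$, which is exactly the full objective of IP~\ref{eq:OPT1} evaluated at the \emph{true} $v_i$. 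Reporting $v_i' = v_i$ causes the mechanism to maximize precisely this quantity, so no misreport can do better.

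The one genuine subtlety — the step I expect to need the most care — is the interaction between the tie-breaking / selection rule of the $\argmax$ and the fact that the agent could, by misreporting, steer the mechanism to a different maximizer of the \emph{perceived} objective that happens to differ on a set where $i$'s true contribution $\rho_i v_i(\cdot)$ is larger. The clean resolution is to note that for any report $v_i'$, the chosen allocation $A(v_i',v_{-i})$ is \emph{feasible} for the optimization at the true profile, so $\rho_i v_i(A(v_i',v_{-i})) + W_i(A(v_i',v_{-i}),v) \leq \rho_i v_i(A(v_i,v_{-i})) + W_i(A(v_i,v_{-i}),v)$ by optimality of $A(v_i,v_{-i})$ for the true-profile objective; this inequality is all that is needed and it holds regardless of how ties are broken. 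I would also remark that the integrality result proved in \Cref{sec:TheoreticalResults} ensures $\mathcal{A}(\cdot)$ returns a genuine (integral) allocation in $\mathcal{A}$, so the utilities above are well defined, but truthfulness itself would follow from the affine-maximizer structure even at the LP level.
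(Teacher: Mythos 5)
Your proposal is correct and follows essentially the same route as the paper's proof: substitute the payment \eqref{eq:payment} into the quasi-linear utility, observe that agent $i$'s utility equals $\tfrac{1}{\rho_i}$ times the affine social objective evaluated at the \emph{true} $v_i$ minus the report-independent term $\tfrac{1}{\rho_i}h_i(v_{-i})$, and conclude from the optimality of $A(v_i,v_{-i})$ for that objective (the paper phrases this as a contradiction, you phrase it directly via feasibility of $A(v_i',v_{-i})$, but the content is identical). Your explicit treatment of the $\rho_i=0$ case is a small point of added care that the paper's proof silently skips, and your observation that the argument is tie-breaking-robust is correct.
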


The above result implies that irrespective of the reported valuations of the other movements, a given movement's utility is maximized when it reports its valuations truthfully. 

Our next result shows that the movements are incentivized to participate in \mechabbrv\ voluntarily.
\begin{theorem}\label{thm:IR}
\mechabbrv\ is individually rational for every movement.
\end{theorem}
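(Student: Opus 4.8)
The plan is to exploit the VCG-like structure of the payment rule in \eqref{eq:payment}. Recall that the allocation $A(v_i,v_{-i})$ maximizes the affine objective $W(x \mid v) := \sum_{j\in S}\sum_{k\in M}\rho_k v_{kj}x_{kj} - g\sum_{j\in S}e_j(x)$ over the feasible polytope, while $A(v_{-i})$ maximizes the same objective when agent $i$ is removed (equivalently, with $\rho_i v_i$ set to zero). Writing $W_{-i}(x\mid v) := \sum_{k\in M\setminus\{i\}}\rho_k v_k(x) - g\sum_{j\in S}e_j(x)$ for the ``everyone but $i$'' part of the objective, the quantity $h_i(v_{-i}) = W_{-i}(A(v_{-i})\mid v)$ is the optimal value of $W_{-i}$ when $i$ is absent. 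Plugging the payment into the utility and using $v_i(A(v_i,v_{-i})) = \tfrac{1}{\rho_i}\cdot \rho_i v_i(A(v_i,v_{-i}))$ (for the case $\rho_i>0$; the case $\rho_i=0$ gives $p_i=0$ and $u_i = v_i(A)\ge 0$ trivially), one gets
\[
u_i\big((A,p),v_i\big) \;=\; \frac{1}{\rho_i}\Big( W\big(A(v_i,v_{-i})\mid v\big) \;-\; h_i(v_{-i}) \Big).
\]

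The core step is then the comparison $W\big(A(v_i,v_{-i})\mid v\big) \ge h_i(v_{-i})$. For this I would take the allocation $A(v_{-i})$ that is optimal for the $(m-1)$-agent problem and observe it is also feasible for the $m$-agent problem (it simply leaves agent $i$ unassigned, which satisfies every constraint in \eqref{eq:OPT1}). Since $\rho_i v_i(A(v_{-i})) \ge 0$ because $v_{ij}\ge 0$ and $\rho_i\ge 0$, we have $W\big(A(v_{-i})\mid v\big) = W_{-i}\big(A(v_{-i})\mid v\big) + \rho_i v_i\big(A(v_{-i})\big) \ge h_i(v_{-i})$. By optimality of $A(v_i,v_{-i})$ for the full problem, $W\big(A(v_i,v_{-i})\mid v\big) \ge W\big(A(v_{-i})\mid v\big) \ge h_i(v_{-i})$. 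Dividing by $\rho_i>0$ preserves the inequality, so $u_i \ge 0$.

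The main obstacle — and the only place one must be careful — is the bookkeeping around the congestion term $e_j(\cdot)$ and the normalization by $1/\rho_i$. One must verify that the $g\sum_j e_j$ terms in $h_i(v_{-i})$, in the payment \eqref{eq:payment}, and in $W$ are consistently the ``same'' congestion term (they are, since $e_j$ depends only on $x$, not on valuations or on which agent is the pivotal one), so that the telescoping above is exact and nothing about $e_j$ being only a one-sided $(\cdot)^+$ breaks monotonicity; since $e_j$ enters $W$ with a negative sign and $A(v_{-i})$ is genuinely feasible, no issue arises. I would also note explicitly that feasibility of $A(v_{-i})$ in the larger instance is what makes the argument work, and that non-negativity of valuations (assumed in the model, $v_{ij}\in\mathbb{R}_{\ge 0}$) together with $\rho_i\ge 0$ is exactly what is needed for the ``extra'' term $\rho_i v_i(A(v_{-i}))$ to be non-negative. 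The $\rho_i=0$ case is handled separately and is immediate from $p_i=0$ and $v_i(A)\ge 0$.
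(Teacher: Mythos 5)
Your proposal is correct and takes essentially the same route as the paper's proof: both reduce $u_i$ to $\frac{1}{\rho_i}\bigl(W(A(v_i,v_{-i}))-h_i(v_{-i})\bigr)$ and establish non-negativity by combining the optimality of $A(v_i,v_{-i})$ for the full objective, the feasibility of $A(v_{-i})$ in the full problem, and the non-negativity of $\rho_i v_i(A(v_{-i}))$, with the $\rho_i=0$ case dispatched separately. The paper merely phrases the same chain of inequalities as an add-and-subtract of $\rho_i v_i(A(v_{-i}))$ rather than as your two-step comparison through $W(A(v_{-i}))$.
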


The proof shows that every movement gets non-negative utility from participating in \mechabbrv.

The following few results show that even if the allocation problem of IP~\ref{eq:OPT1} falls in a computationally intractable class, its special structure can be used to find a tractable solution.

\begin{theorem} \label{thm:IntegerSolution}
The allocation of \mechabbrv\ given by LP~\ref{eq:LP_relaxation} has an integral optimal solution and is polynomially solvable.
\end{theorem}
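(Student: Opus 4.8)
The plan is to rewrite LP~\ref{eq:LP_relaxation} as an equivalent linear program whose constraint matrix is totally unimodular with an integral right-hand side; both integrality of the optimum and polynomial solvability then follow from standard polyhedral theory. First I would linearize the congestion term. Since $g\geqslant 0$ and $e_j(x)=(\sum_{i\in M}x_{ij}-C_j(1-\lambda))^+$, I introduce an auxiliary variable $e_j$ for each slot $j$, impose $e_j\geqslant 0$ and $e_j\geqslant \sum_{i\in M}x_{ij}-C_j(1-\lambda)$, and replace the objective by $\sum_{j\in S}\sum_{i\in M}\rho_i v_{ij}x_{ij}-g\sum_{j\in S}e_j$. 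Because the term $-ge_j$ is being maximized, every optimum of the new LP has $e_j=(\sum_{i\in M}x_{ij}-C_j(1-\lambda))^+$, so the new LP has the same optimal value as LP~\ref{eq:LP_relaxation} and the $x$-part of any of its optima is an optimum of LP~\ref{eq:LP_relaxation} (and, once shown to be $\{0,1\}$-valued, of IP~\ref{eq:OPT1}). This LP has $mn+n$ variables and $2n+m$ functional constraints, so it is of polynomial size.

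Then I would prove that the constraint matrix of the linearized LP is totally unimodular. Let $N$ be the $(m+n)\times mn$ vertex--edge incidence matrix of the complete bipartite graph on $M\cup S$: its movement-rows and slot-rows encode $\sum_{j\in S}x_{ij}\leqslant 1$ and $\sum_{i\in M}x_{ij}\leqslant C_j$, and $N$ is totally unimodular by the classical theorem on incidence matrices of bipartite graphs. In the variables $(x,e)$ the full constraint matrix has the block form $\left(\begin{smallmatrix} N & 0 \\ N_S & -I_n \end{smallmatrix}\right)$, where $N_S$ is the submatrix of $N$ on the slot rows (the coefficients of $x$ in the $e_j$-constraints) and $-I_n$ is the column block of the $e_j$'s. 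To see this matrix is totally unimodular, I take an arbitrary square submatrix and expand its determinant successively along the selected $e_j$-columns: each such column has a single nonzero entry ($-1$) or is identically zero, so up to sign the determinant collapses to that of a square submatrix using only $x$-columns, i.e.\ a submatrix of $\left(\begin{smallmatrix} N \\ N_S \end{smallmatrix}\right)$; the latter is either singular (it contains two copies of the same slot row) or a genuine submatrix of the totally unimodular matrix $N$, and in both cases its determinant lies in $\{0,\pm 1\}$.

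Finally I would conclude. The right-hand side consists of the $C_j$, the $1$'s, and the thresholds $C_j(1-\lambda)$, which we take to be integral (the congestion-free part of the integer capacity $C_j$). The linearized LP is feasible ($x=0$, $e=0$) and its objective is bounded above (since $-g\sum_{j}e_j\leqslant 0$), so it attains its maximum at a vertex of its feasible polyhedron, which is pointed by the nonnegativity constraints; a totally unimodular system with integral right-hand side has only integral vertices, so this optimal vertex is integral. Solving the linearized LP in polynomial time, taking an optimal vertex, and reading off $x$ therefore yields, in polynomial time, an integral optimum of LP~\ref{eq:LP_relaxation} coinciding with the optimum of IP~\ref{eq:OPT1}. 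I expect the main obstacle to be exactly the treatment of the congestion term $e_j(\cdot)$: its positive-part form is what makes the raw program nonlinear, and the work lies in checking that the auxiliary variables $e_j$ do not destroy total unimodularity (the determinant expansion above) and that the thresholds $C_j(1-\lambda)$ are integral; if instead $e_j$ is an arbitrary \emph{linear} function of $x$, as remarked after \eqref{eq:e_j_definition}, the linearization step is unnecessary and integrality follows directly from total unimodularity of $N$.
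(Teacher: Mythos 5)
Your proposal is correct and follows the same overall strategy as the paper: linearize the positive-part congestion term with auxiliary slack variables (your $e_j$ are the paper's $w_j$, with the constraint written as $\sum_{i\in M}x_{ij}-w_j\leqslant T_j$), then show the resulting constraint matrix is totally unimodular and invoke integrality of vertices plus polynomial solvability of LPs. Where you genuinely diverge is in the proof of total unimodularity. The paper applies the Ghouila--Houri characterization directly to the $(m+2n)\times(mn+n)$ matrix, partitioning the rows into three classes and running an ad hoc reassignment procedure to fix columns whose partial sums come out to $2$; that argument works but its convergence step is stated somewhat informally. You instead exhibit the matrix as $\bigl(\begin{smallmatrix} N & 0 \\ N_S & -I_n\end{smallmatrix}\bigr)$ with $N$ the incidence matrix of a bipartite graph, and reduce to the classical fact that such incidence matrices are TU via cofactor expansion along the unit columns and the observation that $\bigl(\begin{smallmatrix} N \\ N_S\end{smallmatrix}\bigr)$ only duplicates rows. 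This is the standard ``TU is preserved under duplicating rows and appending signed unit columns'' route; it is shorter, avoids the case analysis, and makes the connection to matching polytopes transparent (indeed the paper later re-derives strong polynomiality by an explicit reduction to $b$-matching on essentially the same bipartite graph you construct). One further point in your favor: you explicitly flag that the thresholds $T_j=C_j(1-\lambda)$ must be integral for the integral-right-hand-side argument to apply, an assumption the paper uses silently.
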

\begin{proof}
Using the definition of $e_j$ in \Cref{eq:e_j_definition}, LP~\ref{eq:LP_relaxation} can be written as follows
\begin{align}
\label{eq:OPT2}
\centering
\begin{split}
&{\argmax_{x, w} } \quad \displaystyle\sum_{j \in S} \  \sum_{i \in M}  (\rho_i \ v_{ij}\ x_{ij})\ \  -\ \  w_j\ g \\
&\text{subject to} \quad \displaystyle\sum_{i \in M} x_{ij} \leqslant C_{j}, \quad \forall j \in S\\
& \quad \quad \displaystyle\sum_{j \in S} x_{ij} \leqslant 1,  \quad \forall i \in M\\
& \quad \displaystyle\sum_{i \in M} x_{ij} - w_j \leqslant T_j, \quad \forall j \in S\\
& \quad \quad  x_{ij}  \geqslant 0, \quad w_j \geqslant 0, \quad \forall i \in M,\  \ \forall j \in S
\end{split}
\end{align}
where, $T_j=C_{j}(1-\lambda)$.
\begin{claim}
The coefficient matrix of optimization problem~\ref{eq:OPT2} is totally unimodular (TU).
\end{claim} 
\begin{proof}
First we linearize the variables $x$ and $w$ of the optimization problem into a single vector $\bar{x}$ as $(x_{11}, x_{12}, \ldots, x_{1n}, \ldots, x_{m1}, x_{m2}, \ldots, x_{mn}, w_{1}, w_{2}, \ldots, w_{n})^\top$.
The constraints in the \Cref{eq:OPT2} can be written as $Z\bar{x} \leqslant b$, where, $Z_{(m+2n)\times(mn+n)}$ is the coefficient matrix and $b_{(m+2n)\times 1}$ is the bound vector as shown below. 
\begin{gather*}
 \begin{bmatrix}
 I_{n\times n} & I_{n\times n} & I_{n\times n} & \dots & I_{n\times n} & 0 \\
 \mathds{1}_{1\times n} & 0 &  \dots & 0 & 0 & 0 \\
 0 & \mathds{1}_{1\times n} & \dots & 0 & 0 & 0 \\
 0 & 0 & \ddots & 0 & 0 & 0\\
  0 & 0 & 0 & 0 & \mathds{1}_{1\times n} & 0 \\
 I_{n\times n} & I_{n\times n} &  I_{n\times n} & \dots & I_{n\times n} & -I_{n\times n} \\ 
\end{bmatrix}
 \begin{bmatrix} x_{11} \\ \vdots\\ x_{1n} \\ \vdots \\ x_{m1}\\ \vdots \\ x_{mn}\\ w_{1}\\ \vdots\\w_{n}\end{bmatrix}
 \leqslant
  \begin{bmatrix}
 C_{1} \\\vdots\\ C_{n}\\ 1 \\ \vdots \\ 1\\ T_{1}\\ \vdots\\ T_{n}
  \end{bmatrix}
\end{gather*} 
The first $n$ rows of $Z$ has $m$ ($n \times n $) identity matrices followed by $\mathds(0)_{n\times n}$. Each of the next $m$ rows has exactly one ($1\times n$) vector of all $1$s staggered as shown above. The last $n$ rows are similar to the first $n$ rows with the last $n$ columns being a negative $n \times n$ identity matrix.

We use the Ghouila-Houri (GH) characterization \citep{ghouila1962caracterisation,de1981some} to prove that $Z$ is TU, which says that a matrix $Z_{p \times q}$ is TU if and only if any subset $R$ of rows, $R \subseteq \{ 1,2, \cdots, p\}$, can be partitioned into two subsets $R_1$ and $R_2$, such that, $\sum_{i \in R_{1}} z_{ij} - \sum_{i \in R_{2}} z_{ij} \in \{1,0,-1\}$, where $j= 1,2,\ldots,q$.
In our case, each column of the coefficient matrix $Z$ has at most three $1$'s or one $-1$. 
Note that, the rows can be easily partitioned into {\em three} classes: class $1$ consists of first $n$ rows, class $2$ consists of next $m$ rows, class $3$ consists of last $n$ rows.
There are three exhaustive cases for the subset $R$ of rows of $Z$. 
\begin{enumerate} 
 \item $R$ consists of rows from all three classes of rows. \label{case:1}
 \item $R$ consists of rows from any two classes of rows. \label{case:2}
 \item $R$ consists of rows from exactly one class of rows. \label{case:3}
\end{enumerate}
For case~\ref{case:1}, we find the two disjoint subsets $R_1$ and $R_2$ of $R$ in an iterative manner. Begin with the partition where $R_1$ consists of rows from the first two classes of rows of $Z$ in $R$, and $R_2$ consists of rows from last class rows of $Z$ in $R$. There can be a situation where for a certain column $j$, the sum $\sum_{i \in R_1} z_{ij} = 2$ and the $\sum_{i \in R_2} z_{ij} = 0$. This can only happen when for column $j$ and rows in $R_1$, the sum of $z_{ij}$s where $i$'s are from the rows from class $1$ is $1$, for the $i$'s in class $2$ there exists a row that has $n$ $1$s intersecting with $j$, and the sum of $z_{ij}$s where $i$'s are from the rows in $R_2$ has no $1$s in $j$. In such a case, we move that row in class $2$ from $R_1$ to $R_2$. Repeat this procedure for every such column until no such situation exists. This procedure is guaranteed to converge to a partition of $R$ such that GH conditions are met.

Cases~\ref{case:2} and \ref{case:3} are straightforward. In case~\ref{case:2}, the partition $R_1$ and $R_2$ are the intersections of $R$ with the respective classes of rows. For case~\ref{case:3}, any partition of $R$ satisfies GH conditions. Hence proved.
\end{proof}

The optimization problem in \Cref{eq:OPT2} has a TU coefficient matrix, which implies LP~\ref{eq:LP_relaxation} yields an optimal solution in integers and is solvable in polynomial time.
\end{proof}
\Cref{thm:IntegerSolution} shows that the solution to the LP relaxation of the allocation problem is without loss of optimality. LPs are known to be polynomially computable. However, in general, it can be weakly polynomial, i.e., the space used by the mechanism may not be bounded by a polynomial in the input size. The forthcoming results show that the solutions of allocation and payments of \mechabbrv\ are strongly polynomial. 
To show this, we reduce (LP~\ref{eq:OPT2}) to the $b$-matching problem, which is known to be strongly polynomial \citep{anstee1987polynomial}. 

\begin{definition}[$b$-Matching Problem \citep{anstee1987polynomial}]
Consider a graph $G=(V,E)$, where $V$ is the set of nodes and $E$ is the set of edges. Each edge $e_{u,v} \in E$ between any two nodes $u,v \in V$, has a cost $c_{u,v}$. Let $b=(b_1,b_2,\dots,b_{|V|})$. A $b$-matching problem for $G$ is to find the non-negative integer edge weights $w_{u,v}$ which maximises the total cost, $\sum_{u,v \in V}c_{u,v}\ w_{u,v}$ where the sum of weight on edges incident to a node $u$ is no more than $b_u$, $\forall u \in V$.
\end{definition}

\begin{theorem}\label{thm:allocationStrongly_Polynomial}
The allocation of \mechabbrv, given by LP~\ref{eq:LP_relaxation}, is implementable in strongly polynomial time.
\end{theorem}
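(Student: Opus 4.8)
The plan is to exhibit an explicit reduction from the LP in \Cref{eq:OPT2} to a $b$-matching instance, and then invoke the known strongly-polynomial algorithm for $b$-matching \citep{anstee1987polynomial} together with \Cref{thm:IntegerSolution} (which guarantees that the LP optimum is integral, so the $b$-matching optimum solves our allocation problem exactly). First I would build the graph $G=(V,E)$: create one node for each movement $i\in M$, one node for each slot $j\in S$, and one extra ``congestion'' node $s_j$ (or a global sink) for each slot to absorb the $w_j$ variables. For each pair $(i,j)$ put an edge between the movement node $i$ and the slot node $j$ with cost $c_{i,j}=\rho_i v_{ij}$; the edge weight $w_{i,j}$ will play the role of $x_{ij}$. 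To encode the congestion term $-w_j g$, I would add, for each slot $j$, an edge carrying cost $-g$ whose weight is $w_j$; the trick is to route this edge so that the degree constraints reproduce exactly the inequality $\sum_{i\in M} x_{ij} - w_j \leqslant T_j$ alongside $\sum_{i\in M} x_{ij}\leqslant C_j$ and $\sum_{j\in S} x_{ij}\leqslant 1$.

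The main technical work is matching up the three families of constraints in \Cref{eq:OPT2} with node-degree bounds $b_u$. The movement constraint $\sum_{j} x_{ij}\leqslant 1$ is immediate: set $b_i=1$ at each movement node. For the two slot-side constraints I would split each slot into two nodes, one enforcing the hard capacity $C_j$ and one enforcing the soft threshold $T_j$, and have each $(i,j)$ edge incident to both copies (possibly by subdividing the edge with an auxiliary degree-$\infty$ relay node so that the same weight $x_{ij}$ is counted at both slot-copies). The congestion edge of weight $w_j$ would be incident only to the $T_j$-copy, on the same side as the $x_{ij}$'s but with the opposite effect — which is exactly the role of $-w_j$ in $\sum_i x_{ij}-w_j\leqslant T_j$. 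Since $b$-matching constraints are of the form ``sum of incident weights $\leqslant b_u$'' with non-negative weights, and a ``$-w_j$'' cannot literally appear, the honest way to do this is to observe that $\sum_i x_{ij}-w_j\leqslant T_j$ is equivalent, after introducing a non-negative slack, to a pure packing constraint; one then checks that maximizing $\sum \rho_i v_{ij} x_{ij} - g\sum_j w_j$ over this polytope is unchanged. Alternatively — and this is probably cleaner — I would appeal to the fact that \Cref{eq:OPT2} already has a totally unimodular constraint matrix and is a transportation-type (network) LP, so it can be cast directly as a min-cost flow / $b$-matching instance on a bipartite-plus-sink network; the strongly-polynomial guarantee then follows from \citep{anstee1987polynomial}.

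After the reduction is set up, the remaining steps are routine: (i) verify that feasible $b$-matchings correspond bijectively to feasible $(x,w)$ of \Cref{eq:OPT2} with equal objective value (up to the sign convention, since $b$-matching maximizes $\sum c_{u,v} w_{u,v}$ and our costs on the congestion edges are negative, which is allowed); (ii) note that the instance size — number of nodes, edges, and the magnitudes entering the combinatorial part — is polynomial in $m$ and $n$, the number of integers in the input; (iii) invoke the strongly-polynomial $b$-matching algorithm to obtain an integral optimum; (iv) use \Cref{thm:IntegerSolution} to conclude this integral optimum is also optimal for LP~\ref{eq:LP_relaxation} and hence for IP~\ref{eq:OPT1}. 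The hard part will be step (i): making the congestion variable $w_j$, which enters with a negative coefficient and on the ``packing'' side of its constraint, fit the rigid $b$-matching template without distorting the feasible region or the objective. I expect to resolve this either by the slack-variable rewriting sketched above or by routing the $w_j$-edge to a dedicated sink node whose degree bound, combined with the relay construction, yields precisely the inequality $\sum_i x_{ij}-w_j\leqslant T_j$; once that encoding is pinned down, everything else is bookkeeping.
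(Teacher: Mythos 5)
Your high-level plan is the same as the paper's (reduce LP~\ref{eq:LP_relaxation} to a $b$-matching instance and invoke \citep{anstee1987polynomial}), and you correctly identify the crux: the congestion variable $w_j$ enters the objective with a negative coefficient and enters its constraint as $-w_j$, neither of which fits the $b$-matching template, where every incident edge contributes $+1$ to a degree bound and the objective is maximized. But you do not resolve that crux, and both of the fixes you sketch fail as stated. A negative-cost edge carrying $w_j$ would simply be set to zero by the maximization (nothing in a pure packing/degree-bound system can force an edge weight to be positive), which would then cap $\sum_i x_{ij}$ at $T_j$ instead of allowing the threshold to be exceeded at a penalty. The ``split each slot into a $C_j$-copy and a $T_j$-copy with a relay node'' construction also does not work: a $b$-matching places independent weights on the two half-edges of a subdivided edge, and a degree bound of $\infty$ at the relay does nothing to force them to be equal, so the same $x_{ij}$ is not counted at both copies. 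The appeal to ``TU, hence a network LP, hence castable as $b$-matching'' is exactly the assertion that needs proof; total unimodularity by itself (via \Cref{thm:IntegerSolution}) gives integrality, not a strongly polynomial combinatorial algorithm.

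The missing idea is a complementation of the congestion variable. Keep a \emph{single} node per slot $j$ with degree bound $C_j$, and add one auxiliary node $t_j$ with degree bound $\lambda C_j$ joined to $j$ by an edge of \emph{positive} cost $g$. Writing $y_{t_j,j}=\lambda C_j - w_j$ (the unused congestion-prone capacity), the penalty $-g\,w_j$ becomes the reward $g\,y_{t_j,j}$ minus the constant $g\lambda C_j$, the constraint $\sum_i x_{ij}-w_j\leqslant T_j$ becomes $\sum_i x_{ij}+y_{t_j,j}\leqslant C_j$ (the slot's own degree bound), and $w_j\geqslant 0$ becomes $y_{t_j,j}\leqslant \lambda C_j$ (the bound at $t_j$). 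The objectives of the two problems then differ by the constant $g\sum_{j\in S}\lambda C_j$, so optima correspond, and every piece of the instance is a legitimate nonnegative-weight, upper-bounded-degree $b$-matching datum. This is within reach of the ``slack-variable rewriting'' you gesture at, but since pinning down this encoding is the entire technical content of the theorem, leaving it as ``I expect to resolve this either by\ldots or by\ldots'' is a genuine gap rather than a deferral of bookkeeping.
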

\begin{proof}
 Consider an edge weighted bipartite graph, $G=(P,Q,E)$, where $P=M\cup \{t_j|j\in S\}$ and $Q=S$. The edges in $E$ are in two disjoint partitions, $E_1$ and $E_2$. The first partition is $E_1= \{(e_{i,j})|i \in M $ and $j\in Q\}$. Set the cost $c_{i,j}=\rho_i v_{ij}, \forall e_{i,j} \in E_1$, where $\rho_i$ and $v_{ij}$ are the RCOF and valuation of movement $i$ for slot $j$ respectively. The other partition is $E_2=\{e_{t_j,j}| t_j\in P \setminus M \text{ and } j \in Q\}$ having the cost $c_{t_j,j}=g$, where $g$ is per unit congestion cost. Define $b$ as, $b_i=1$ for every $i \in M$, $b_j=C_{j}$ for every $j \in Q$ and for every $t_j\in P \setminus M$, $b_{t_j}=\lambda C_{j}$ \footnote{This is the size of congestion prone capacity of slot $j$}. \Cref{fig:G_figure} shows the edge-weighted bipartite graph $G=(P,Q,E)$. The red and blue colored edges denote the subsets $E_1$ and $E_2$ respectively. To make the notations simpler, we represent $e_{i,j}$ as $(i,j)$.

 \begin{figure}[t]
 \leavevmode
     \centering
     \includegraphics[scale=0.5,width=0.6\textwidth]{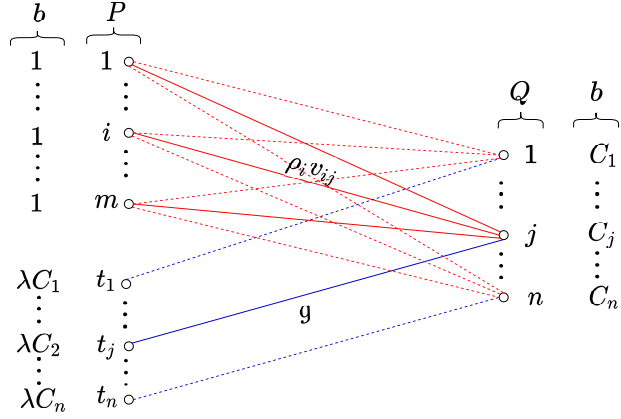}
     \caption{Constructed edge-weighted bipartite graph $G=(P,Q,E)$ and $b$}
     \label{fig:G_figure}
 \end{figure}
The objective function of the $b$-matching problem over $G$ (with $y$ being the optimization variables) is 
 \begin{equation*}
     \sum_{j \in Q} \sum_{i \in M} \rho_i v_{ij}\ y_{i,j} + \sum_{j \in Q}  g\ y_{t_j,j}
 \end{equation*}
 with the constraints, $\sum_{q \in Q} y_{p,q} \leqslant b_p$ and $ \sum_{p \in P} y_{q,p} \leqslant b_q, \ \forall p \in P,\ \forall q \in Q$.
 By the definition of $b$-matching problem, the optimal solution $y^*$ of $b$-matching of $G$ has,
 \begin{align}\label{eq:y_t_j}
    y^{*}_{t_j,j}=  \begin{cases}
C_{j}-\sum_{i\in M} y^{*}_{i,j} & \quad \quad  \sum_{i \in M} y^{*}_{i,j} - C_{j}(1- \lambda)>0\\
 \lambda C_{j} & \quad \quad \sum_{i \in M} y^{*}_{i,j} - C_{j}(1- \lambda)\leqslant 0 
\end{cases}
 \end{align}
 
The correctness of the above equation can be proved as follows. First, note that for an optimal solution $y^{*}$, if $ \exists j \in S$ with $\sum_{i \in M} y^{*}_{ij}<C_j$ then $\nexists i\in M$ such that $y^{*}_{ij} < b_i$ and $\rho_i v_{ij} \geqslant g$ -- because otherwise $y^{*}_{ij}$ can be increased maintaining the feasibility of the constraints ($b_i=1$ and $b_j=C_j$) which will increase the value of the objective function, contradicting the optimality of $y^{*}$. 

Consider case 1 of \Cref{eq:y_t_j}, $\sum_{i \in M} y^{*}_{ij} - C_{j}(1- \lambda)>0$: $\sum_{i\in M} y^{*}_{i,j}$ is the total weight on the red edges. The remaining weight $C_{j}-\sum_{i\in M} y^{*}_{i,j} $ is less than $ \lambda C_j$ and,  as $\nexists i\in M$ such that $y^{*}_{ij}< b_i$ and $\rho_i v_{ij}\geqslant g$ therefore, adding all the remaining weight to $y^{*}_{t_j,j}$ is always (a)~within the constraints of b-matching and (b)~maximize the value of the objective function. 

Consider case 2 of \Cref{eq:y_t_j}, $\sum_{i \in M} y^{*}_{ij} - C_{j}(1- \lambda) \leq 0$: the remaining weight $C_{j}-\sum_{i\in M} y^{*}_{i,j} $ is more than $ \lambda C_j$. But, $y^{*}_{t_j,j}$ can not be greater than $\lambda C_{j}$ as $b_{t_j} = \lambda C_{j}$. The optimal solution is to add the remaining weight to $y^{*}_{t_j,j}$ within the constraint $b_{t_j}=\lambda C_j$ for $j$.

Our next result formally reduces the allocation problem of \mechabbrv\ into a $b$-matching problem. We do that via a lemma, that constructs a solution $x$ for our problem from the solution $y$ of $b$-matching of $G$ as, $x_{i,j}=y_{i,j}$, for $i \in M$ and $j \in S$. Similarly, we construct a solution $y$ for $b$-matching of $G$ from a solution $x$ for our problem as, $y_{i,j}=x_{i,j}$, for $e_{i,j} \in E_1 $ and, $y_{t_j,j}$ using \Cref{eq:y_t_j} for $e_{t_j,j} \in E_2$. 
 \begin{lemma}\label{lemma:b_matching}
 Let $y^*$ is a solution for $b$-matching for graph $G$ and ${x}^*$ is s.t. ${x}^{*}_{i,j}=y^{*}_{i,j}$, $\forall i \in M$ and $ \forall j \in Q$. Then, ${x}^{*}$ is the optimal solution for the LP in \Cref{eq:LP_relaxation} iff $y^*$ is an optimal solution for $b$-matching for graph $G$.
 \end{lemma}
\begin{proof}
 First, we prove that if $y^*$ is an optimal solution for $b$-matching for graph $G$, then ${x}^{*}$ is an optimal solution for the LP in \Cref{eq:LP_relaxation}.
 \par 
  Suppose the above lemma is not true and the optimal solution for LP in \Cref{eq:LP_relaxation} is $\bar{x}$ but not ${x}^{*}$. Then, for the value of objective function in \Cref{eq:LP_relaxation},
  \begin{equation*}
      \sum_{j \in S} \  \sum_{i \in M}  \rho_i \ v_{ij}\ \bar{x}_{i,j}\ -\ e_j(\bar{x})\ g \ > \ \sum_{j \in S} \  \sum_{i \in M}  \rho_i \ v_{ij}\ x^{*}_{i,j}\ -\  e_j(x^{*})\ g
  \end{equation*}
  using the definition of $e_j(x)$ from \Cref{eq:e_j_definition},
  \begin{align*}
   \sum_{j \in S} \  \sum_{i \in M}  \rho_i \ v_{ij}\ \bar{x}_{i,j}\ -\ g\sum_{j \in S} \max \Big(\sum_{i \in M} \bar{x}_{i,j} - C_{j}(1- \lambda), & \ 0\Big) \\ >\  \sum_{j \in S} \  \sum_{i \in M}  \rho_i \ v_{ij}\ x^{*}_{i,j}\ -\  g \sum_{j \in S} \max \Big(\sum_{i \in M} x^{*}_{i,j} - & C_{j}(1- \lambda),\ 0\Big) 
\end{align*}
We divide the set of slots $S$ in two disjoint subsets $S_1,S_2$ with respect to $\bar{x}$ such that, $S_1=\{ j| \sum_{i \in M} \bar{x}_{i,j} - C_{j}(1- \lambda)>0 \}$ and $S_2=\{ j| \sum_{i \in M} \bar{x}_{i,j} - C_{j}(1- \lambda)\leqslant0 \}$. Similarly, divide $S$ in two disjoint subsets $S_3,S_4$ with respect to $x^{*}$ such that, $S_3=\{ j| \sum_{i \in M} x^{*}_{i,j} - C_{j}(1- \lambda)>0 \}$ and $S_4=\{ j| \sum_{i \in M} x^{*}_{i,j} - C_{j}(1- \lambda)\leqslant0 \}$.
 \begin{align*}
   \sum_{j \in S} \  \sum_{i \in M}  \rho_i \ v_{ij}\ \bar{x}_{i,j}\ +\ g\sum_{j \in S_1} \Big(C_{j}-\lambda  C_{j}-\sum_{i \in M}\bar{x}_{i,j} \Big)   & \\ >\   \sum_{j \in S} \ \sum_{i \in M}  \rho_i \ v_{ij}\ x^{*}_{i,j}\ +\ & g\sum_{j \in S_3} \Big(C_{j}-\lambda C_{j}-\sum_{i \in M} x^{*}_{i,j} \Big)  
\end{align*}
Adding  $g\sum_{j\in S} \lambda C_{j}$ on both sides of the above inequality,
 \begin{align*}
   \sum_{j \in S} \  \sum_{i \in M}  \rho_i \ v_{ij}\ \bar{x}_{i,j}\ +\ g\sum_{j \in S_1} \Big(C_{j}-\sum_{i \in M}\bar{x}_{i,j} \Big)  +\ g\sum_{j \in S_2} \lambda C_{j} & \\ >\   \sum_{j \in S} \  \sum_{i \in M}  \rho_i \ v_{ij}\ x^{*}_{i,j}\ +\ g\sum_{j \in S_3} \Big(C_{j}-\sum_{i \in M} x^{*}_{i,j} \Big)  & +\ g\sum_{j \in S_4} \lambda C_{j}
\end{align*}
Let $\bar{y}$ is a solution of $b$-matching for $G$ corresponding to the solution $\bar{x}$ of LP in \Cref{eq:LP_relaxation}, then the expression at right side in above inequality is the value of objective function for b-matching problem, where $\forall j \in S_1$ the case 1 of \Cref{eq:y_t_j} is true and  $\forall j \in S_2$ case 2 of \Cref{eq:y_t_j} is true. The above inequality implies that the value of the objective function of the $b$-matching problem for solution $\bar{y}$ is more than that for $y^{*}$. Therefore, the above inequality contradicts with $y^{*}$ being the optimal solution of the $b$-matching for graph $G$, which implies that ${x}^{*}$ is the optimal solution for the LP in \Cref{eq:LP_relaxation}.
\par To prove the other direction of \Cref{lemma:b_matching}, we construct $y^{*}$ from the optimal solution $x^{*}$ for the LP in \Cref{eq:LP_relaxation} as, $y^{*}_{i,j}=x^{*}_{i,j}$, for $e_{i,j} \in E_1 $ and, $y^{*}_{t_j,j}$ using \Cref{eq:y_t_j} for $e_{t_j,j} \in E_2$. The proof follows by similar argument in the reverse order.
\end{proof}
 \Cref{lemma:b_matching} shows that for every instance of the LP in \Cref{eq:LP_relaxation}, there exists an instance of $b$-matching problem such that the optimal solution of that instance of $b$-matching problem gives the optimal solution for LP in \Cref{eq:LP_relaxation}. Therefore, LP in \Cref{eq:LP_relaxation} is solvable in strongly polynomial time as there exists a combinatorial, strongly polynomial algorithm to solve the $b$-matching problem \citep{anstee1987polynomial}. 
\end{proof}
As the computation of payment for an airline $i \in M$ requires computation of a socially optimal allocation $\forall j \in M \setminus \{ i\}$ using the LP in \Cref{eq:LP_relaxation}, we get \Cref{co:PaymentStrongly_Polynomial}.

\begin{corollary}\label{co:PaymentStrongly_Polynomial}
The computation of payments for all the airlines is implementable in strongly polynomial time. 
\end{corollary}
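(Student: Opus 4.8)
The plan is to reduce the computation of the entire payment vector to a bounded number of allocation computations, each of which \Cref{thm:allocationStrongly_Polynomial} already certifies to be strongly polynomial, and then to bound the leftover arithmetic explicitly. First I would read off from the payment rule in \Cref{eq:payment} and the definition of $h_i(v_{-i})$ that, for a fixed movement $i$ with $\rho_i>0$, computing $p_i$ needs exactly two optimal allocations: the full-instance allocation $A(v_i,v_{-i})=A(v)$, which is already produced once as the output of \mechabbrv, and the reduced allocation $A(v_{-i})$ obtained by deleting movement $i$ from $M$. The key observation is that $A(v_{-i})$ is itself a genuine instance of LP~\ref{eq:LP_relaxation} — same slot set $S$, same capacities $C_j$, same $\lambda$ and $g$, only over the movement set $M\setminus\{i\}$ — so \Cref{thm:allocationStrongly_Polynomial} (via its reduction to $b$-matching) applies to it verbatim. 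Ranging over all $i\in M$ gives $m$ such reduced-instance solves in addition to the single full-instance solve; since $m$ is among the integers of the input, $m+1$ invocations of a strongly polynomial subroutine still use a polynomial number of arithmetic operations and polynomial space.

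Next I would account for the post-processing once these allocations are in hand. Evaluating the affine objective $\sum_{k\in M}\rho_k v_k(A)-g\sum_{j\in S} e_j(A)$ on any given allocation $A$ costs $O(mn)$ additions, multiplications, and comparisons (the comparisons coming from the $(\cdot)^+$ in $e_j$); forming $h_i(v_{-i})$, subtracting the with-$i$ objective value, and performing the single division by $\rho_i$ adds only $O(1)$ further operations for that movement. The $\rho_i=0$ branch sets $p_i=0$ and costs nothing. So the whole vector $(p_i)_{i\in M}$ is obtained with $O(m)$ strongly polynomial allocation computations plus $O(m^2 n)$ further arithmetic operations, all within polynomial space — which is precisely the definition of a strongly polynomial algorithm invoked earlier.

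I do not expect a real obstacle: the corollary is essentially a closure-under-bounded-composition argument built on \Cref{thm:allocationStrongly_Polynomial}, and the excerpt even flags this ("requires computation of a socially optimal allocation $\forall j \in M\setminus\{i\}$"). The only two points I would be careful to state are (i) that deleting a movement genuinely leaves an instance of the same LP, so no re-derivation of the $b$-matching reduction is needed, and (ii) that the division by $\rho_i$ is a single rational operation on quantities already derived from the input, hence legitimate in the arithmetic model underlying the strongly polynomial definition — both immediate.
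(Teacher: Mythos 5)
Your proposal is correct and follows essentially the same route as the paper, which justifies the corollary in a single sentence by noting that each payment $p_i$ requires one additional solve of LP~\ref{eq:LP_relaxation} on the reduced instance without movement $i$, so the whole payment vector needs $m+1$ invocations of the strongly polynomial allocation subroutine of \Cref{thm:allocationStrongly_Polynomial} plus low-order arithmetic. Your write-up is simply a more careful, fully spelled-out version of that same argument (explicitly checking that deleting a movement yields a valid instance and bounding the post-processing), so no further comparison is needed.
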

 Following the \Cref{thm:allocationStrongly_Polynomial} and \Cref{co:PaymentStrongly_Polynomial}, we get \Cref{thm:Strongly_Polynomial}.
\begin{theorem} \label{thm:Strongly_Polynomial} There is a combinatorial strongly polynomial time algorithm for computing the allocation and payments $f=(A,p)$ in \mechabbrv.
\end{theorem}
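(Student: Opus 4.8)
The plan is to obtain \Cref{thm:Strongly_Polynomial} by assembling \Cref{thm:allocationStrongly_Polynomial} and \Cref{co:PaymentStrongly_Polynomial} into a single algorithm and then checking that the "glue" between the two does not spoil strong polynomiality. Concretely, the algorithm I would describe is: (i) build the edge-weighted bipartite graph $G=(P,Q,E)$ of \Cref{thm:allocationStrongly_Polynomial} for the full movement set $M$ and run the combinatorial strongly polynomial $b$-matching algorithm of \citet{anstee1987polynomial} once to get the optimal $y^*$, from which $A(v)$ is read off via $x^*_{i,j}=y^*_{i,j}$; (ii) for each $i\in M$ with $\rho_i>0$, delete the node $i$ together with its incident edges from $G$ — which is again exactly an instance of the same construction, now over $M\setminus\{i\}$ — and run the same algorithm to obtain $A(v_{-i})$, and hence the optimal objective value $h_i(v_{-i})$; (iii) evaluate $p_i$ from \eqref{eq:payment} using $A(v_i,v_{-i})$ and $h_i(v_{-i})$, setting $p_i=0$ whenever $\rho_i=0$.

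For the arithmetic-complexity requirement, I would note that the procedure makes at most $m+1$ calls to a strongly polynomial $b$-matching routine, so the number of arithmetic operations performed inside those calls is polynomial in the number of input integers. The remaining work is post-processing: forming the weighted sums $\sum_{k\neq i}\rho_k v_k(\cdot)$ and the congestion totals $g\sum_{j\in S} e_j(\cdot)$, taking differences, and dividing by $\rho_i$ — altogether $O(m(m+n))$ additions, subtractions, multiplications, comparisons, and divisions, all of which are permitted in the arithmetic model of computation. Hence condition~1 of the definition of strongly polynomial holds.

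For the space requirement, each invocation of the $b$-matching algorithm uses space polynomial in its own input size, and the input sizes of all $m+1$ instances are bounded by that of the original instance. The only data carried between invocations are the optimal allocations (polynomially many bounded integer entries) and the optimal objective values; each such value is a fixed-length arithmetic combination of the numbers $\rho_i$, $v_{ij}$, $g$, $C_j$, and $\lambda$, so its encoding length is polynomial in the input size. Thus condition~2 holds as well, and combining the two conditions yields the claimed combinatorial strongly polynomial algorithm for $f=(A,p)$.

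I do not expect a genuine obstacle here, since \Cref{thm:allocationStrongly_Polynomial} (via \Cref{lemma:b_matching} and the $b$-matching reduction) already carries all the weight. The only two points that need a line of care are: that removing an agent really does leave an instance of the identical $b$-matching form — it does, because the construction of $G$ is uniform in the movement set, so the reduction and \Cref{lemma:b_matching} apply verbatim with $M$ replaced by $M\setminus\{i\}$ — and that the division by $\rho_i$ in \eqref{eq:payment} is harmless, which it is because division is one of the allowed operations in the arithmetic model and it is carried out exactly once per agent.
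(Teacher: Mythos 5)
Your proposal is correct and follows the same route as the paper, which likewise obtains \Cref{thm:Strongly_Polynomial} by combining \Cref{thm:allocationStrongly_Polynomial} with the observation that the payments require only $m$ further runs of the same allocation procedure on the reduced instances over $M\setminus\{i\}$ (\Cref{co:PaymentStrongly_Polynomial}). Your write-up is in fact more careful than the paper's one-line assembly, since you explicitly verify that the agent-deleted instances have the identical $b$-matching form and that the arithmetic and space conditions survive the post-processing.
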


\section{Experimental Results}
\label{sec:experiments}
In this section, we investigate the performance of $\mechabbrv$ in real-world scenarios. While $\mechabbrv$ satisfies several desirable properties of a slot allocation mechanism, its performance with varying congestion cost, use of RCOF and its relevance are not theoretically captured. This is why an experimental study is called for.
\par For the experiment, we  obtain data from two airports in India -- Indira Gandhi International Airport (DEL) and Chennai International Airport (MAA). In the year 2018, DEL handled around 70 million passengers and was the 12$^{th}$ busiest airport in the world and the 6$^{th}$ busiest in Asia. It is designated as level 3 airport and has three near-parallel runways.  
The MAA has a handling capacity of 22.5 million passengers and is the 49$^{th}$ busiest airport in Asia. The DEL is a coordinated airport\footnote{Coordinated Airports are the ones where landing airlines have to acquire landing rights and ensure its operation during a specific time period. The slots are administered by the airport operator or by a government aviation regulator. Landing/takeoff demand at these airports exceeds its capacity.} with high congestion while MAA is a non-coordinated\footnote{In non-coordinated airport, the principles governing slot allocation are less stringent and airlines periodically submit proposed schedules to the administrating authority.} airport with comparatively low congestion. The purpose of choosing these two airports is to evaluate the performance of $\mechabbrv$ in terms of the allocation and payments under different demand/congestion profiles.

\par We divide the slot capacity of the airport into two parts by setting a threshold of $(1-\lambda)$ fraction of the slot capacity. If the number of movements is less than or equal to this threshold, we consider the slot to be congestion-free and consider it congestion prone otherwise. For the experiments, we assume $\lambda = 0.2$.

We also compare $\mechabbrv$ with (a)~the {\em Current allocation}, i.e., the Current movement to slot allocation, and (b)~the {\em IATA guidelines}, which uses the optimization approach proposed by \citet{ribeiro2018optimization} and is based on the IATA guidelines.

The model of \citet{ribeiro2018optimization} minimizes the number of slot requests rejected or displaced. It optimizes the allocation based on the slot availability and airline requests, while accounting for various priorities and requirements included in the IATA guidelines.

\subsection{Summary of the data} 
\par We collect the data of flight schedule with landing and take-off slot details, city of arrival/departure, flight number and service provider. We obtain the flight movement data between January 21 to 25, 2020, for which an average of 867 movements occurred per day for DEL and an average of 220 movements happened per day for MAA airport.\footnote{We chose these dates to obtain the normal air movement patterns before COVID-19 and also because the data were most detailed during this period.} We assign the valuation to each movement based on the revenue generated by the movement. Revenue is obtained by multiplying the ticket prices with aircraft capacity and an average load factor of that origin-destination pair. The ticket price data were collected from a booking website\footnote{\url{www.ixigo.com}} for a time period of those five consecutive days for all the movements. The average load factor is defined as the ratio of the number of passengers carried and the available seats. For the data of aircraft capacity and the load factor of different destinations, we refer to the annual financial results of two public airlines: Indigo and Spicejet, which together have the largest fleet size and operates on the maximum number of routes with almost 65\% market share in India. The valuation of the movements for other time slots was calculated by generating a random value from the histogram of the valuations (which acts as an empirical distribution) for each slot. The method helps in capturing the ticket fare variation across different time slots. 
Since the prices are all in Indian Rupees (INR), the unit of the valuations is INR as well.

Next, we calculate the remote city opportunity factor (RCOF) for which we require two types of data: 
\begin{enumerate}
    \item Social progress index of the origin/destination cities.
    \item The population of the cities.
\end{enumerate} 
The social progress index data is provided by the Institute for Competitiveness, India\footnote{\url{https://competitiveness.in}}. The population data is obtained from the Office of the Registrar General and Census Commissioner, India\footnote{\url{https://censusindia.gov.in}}. The capacity of both the airports for a one-hour time slot is assumed to be the {\em average} of the maximum number of movements handled in those slots in the past.\footnote{A natural question is why we do not consider the maximum of all such number of movements. This is because in such a case, we will never witness congestion where demand is above the slot capacity with the same data.}

\begin{figure}[t!]
    \centering
    \includegraphics[width=\textwidth]{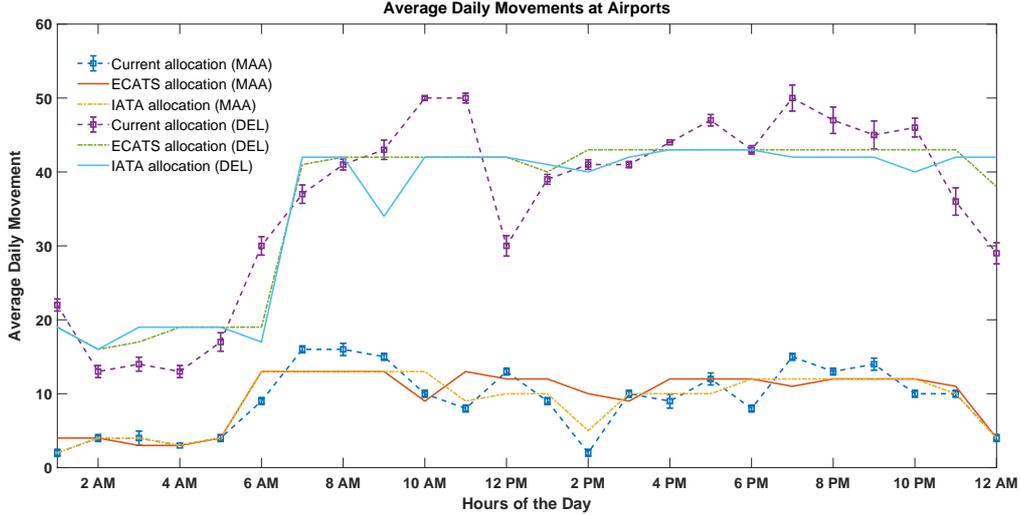}
   
    \caption{Average daily movements in the two airports.}
    \label{fig:f1}
\end{figure}
On plotting the average daily movements of the two airports (see \Cref{fig:f1}), we observe that flight movements are almost four times higher in DEL than MAA. The flight schedule has a higher variation under the Current allocation with few time slots having a large number of allocated movements. $\mechabbrv$ provides a comparatively uniform flight movement with fewer fluctuations. A stable air traffic movement minimizes losses in operational quality and improves resource allocation in the airport. Airports witness situation when terminals are `clogged' because of mis-allocation of airport resources. In other words, avoiding traffic fluctuations and having a stable traffic movements will aid in optimal utilization of airport resources. Incorporation of congestion cost plays an important role in evenness of slot allocations, which is discussed in the next subsection.

\subsection{Effect of congestion cost on the individual and social utility}
 
 We study the effect of congestion cost at the individual and social levels.
 We define the individual utility as the sum of utilities (i.e., the difference of valuation and payment for each movement) of each movement divided by the total number of allotted movements. The social utility is the value of objective function in the optimization problem given by \Cref{eq:OPT1}. Note that since the valuations are in INR, the unit of the utilities is also INR. We find that the individual utility decreases with congestion cost, but the utility is still significantly higher than the other two mechanisms (\Cref{fig:f5,fig:f6}). In addition, the utility of each movement is positive even with increasing congestion cost, which signifies that the airlines stand to gain by participating in the proposed mechanism. 

 \begin{figure}[t!]
    \centering
    \includegraphics[width=1\textwidth]{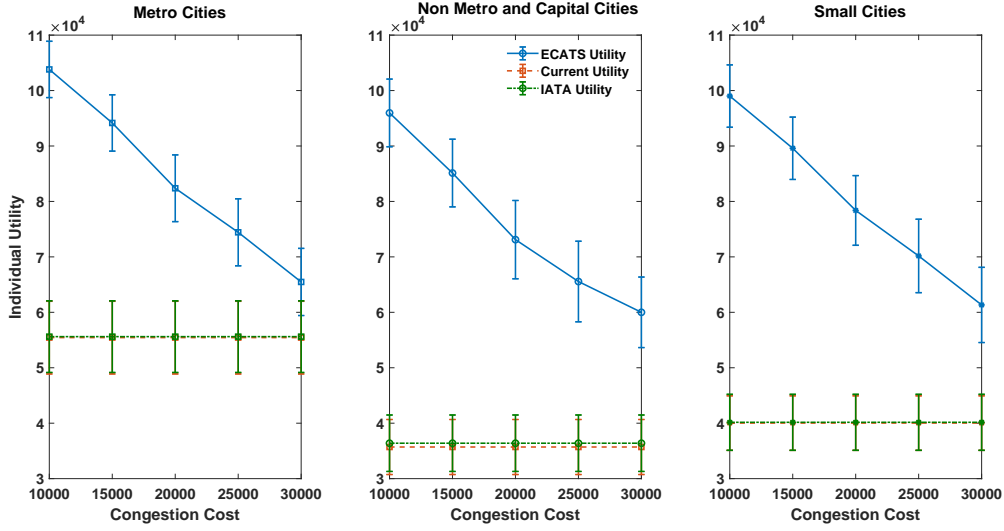}
    \caption{Individual utility of the movements at Delhi (DEL) airport.}
    \label{fig:f5}
\end{figure}

\begin{figure}[t!]
    \centering
    \includegraphics[width=1\textwidth]{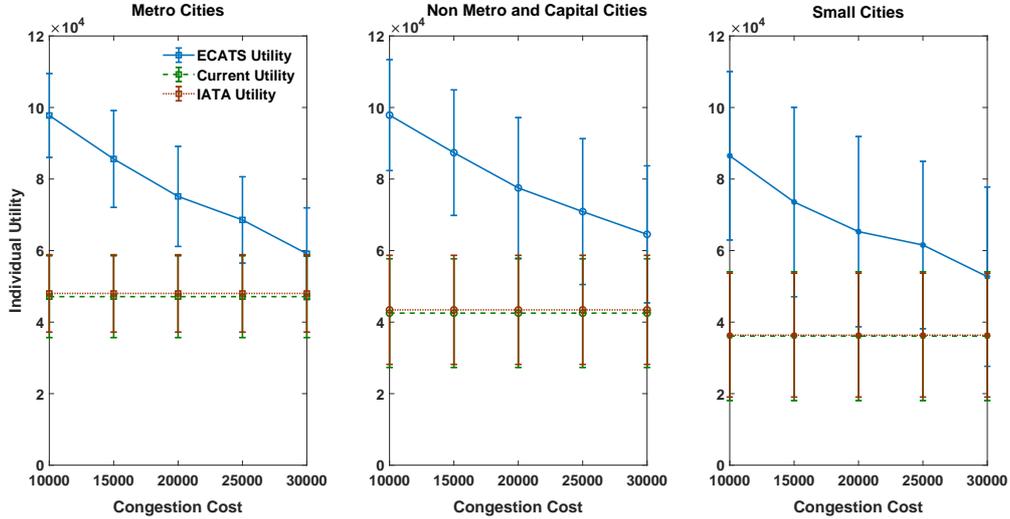}
      \caption{Individual utility of the movements at Chennai (MAA) airport.}
    \label{fig:f6}
\end{figure}

 We also observe that the individual utility is comparable for the case of different type of connections (metro, capital and remote cities). The result suggest that airlines will gain similar utility by operating in trunk routes or by providing connection to remote cities. The individual utility for $\mechabbrv$ decreases but is unchanged for the other two mechanisms. This is because there was no direct consideration of valuation maximization and congestion cost for slot allocations in the IATA and Current allocations. 
 \begin{figure}[t!]
    \centering
    \includegraphics[scale=0.3,width=1\textwidth]{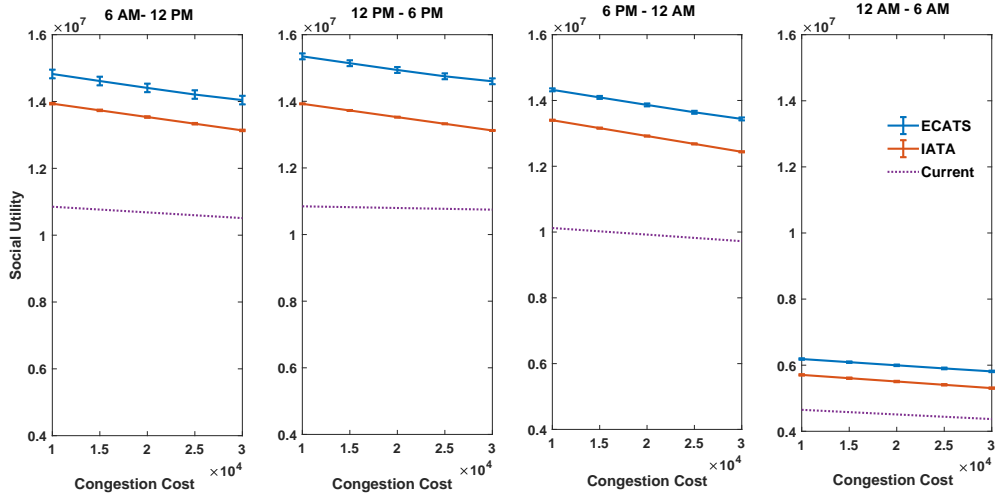}
      \caption{Social utility versus congestion cost: Delhi airport.}
    \label{fig:f7}
\end{figure}
\begin{figure}[t!]
    \centering
    \includegraphics[scale=0.3,width=1\textwidth]{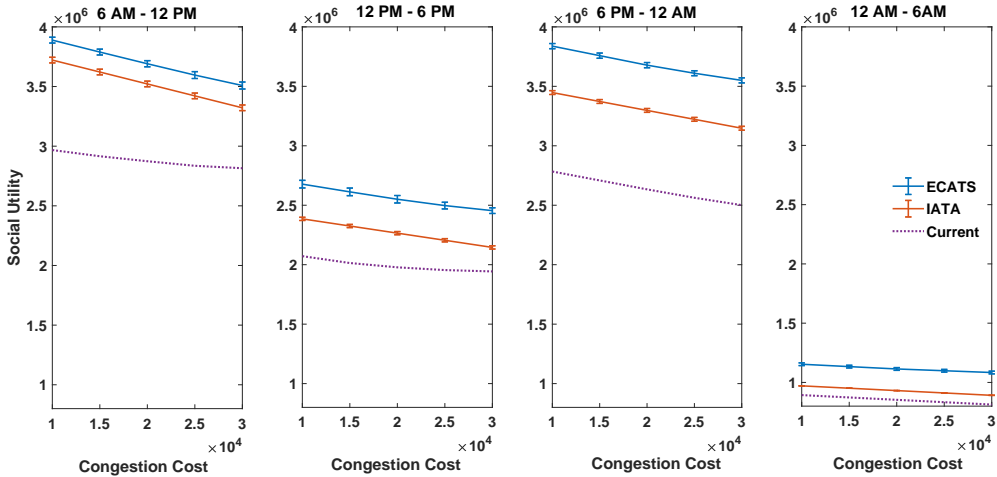}
      \caption{Social utility versus congestion cost: Chennai airport.}
    \label{fig:f8}
\end{figure}

\Cref{fig:f7,fig:f8} show that the social utility decreases with increase in congestion cost. 

As the congestion cost increases, \mechabbrv\ considers the trade-off between allocating slots above the threshold capacity (and hence increasing congestion) and rejecting the slot request. 
Only movements with high valuation are allocated these slots, and the movements with low values are rejected. We see that the social utility of $\mechabbrv$ is higher than IATA based approach as the latter mechanism is focused on minimizing the displacement of the requested slot, and does not take congestion cost into account at the time of allocation. 
 These experiments demonstrate that a value-sensitive allocation like \mechabbrv\ can improve both the individual and social utility by a significant amount.
 
\subsection{Effect of the congestion cost on the payments}
\par 
\begin{figure}[t!]
    \centering
    \includegraphics[width=1\textwidth, height=.22\paperheight]{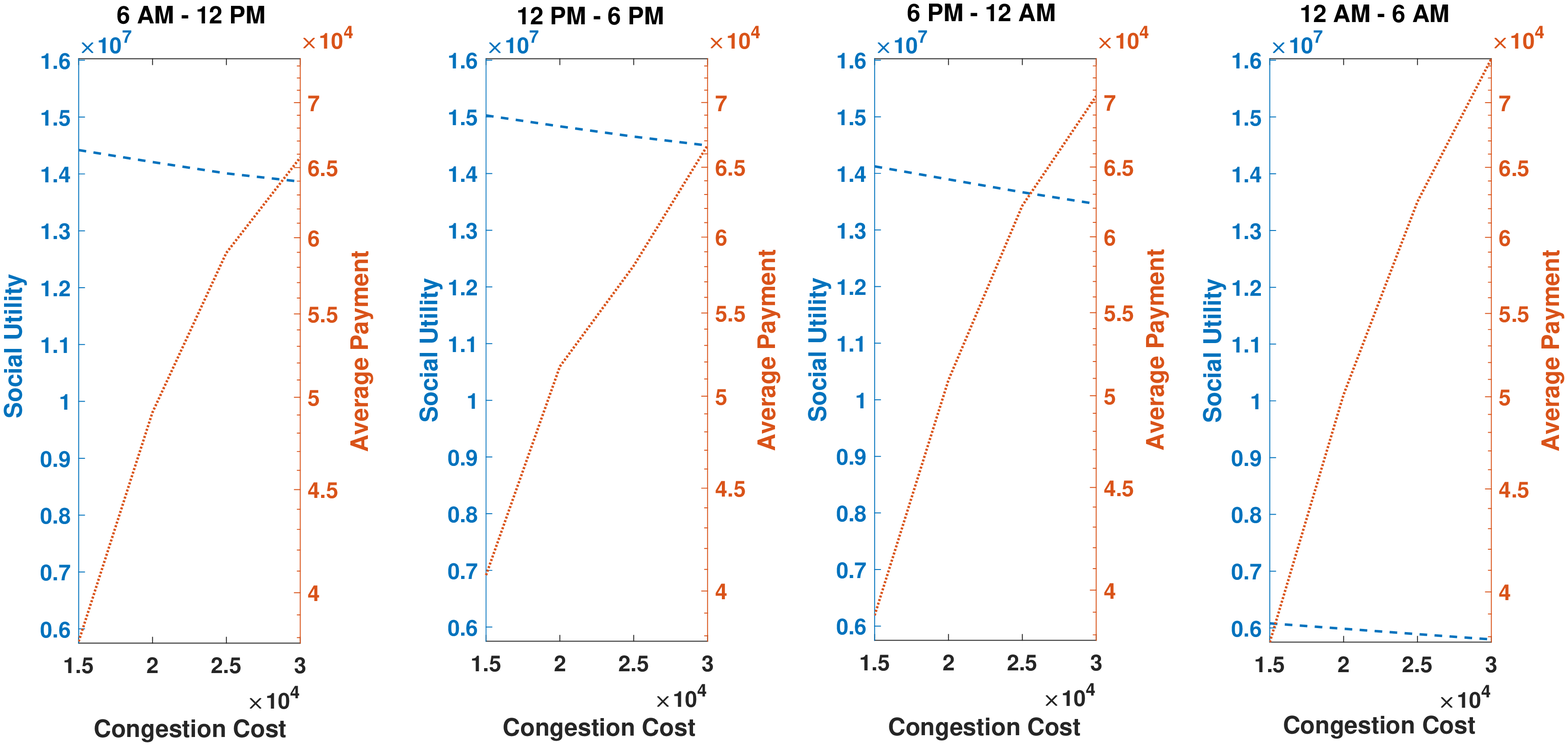}
      \caption{Social utility and average payment versus congestion cost: Delhi airport.}
    \label{fig:f3}
\end{figure}
\begin{figure}[t!]
    \centering
    \includegraphics[width=1\textwidth,height=.22\paperheight]{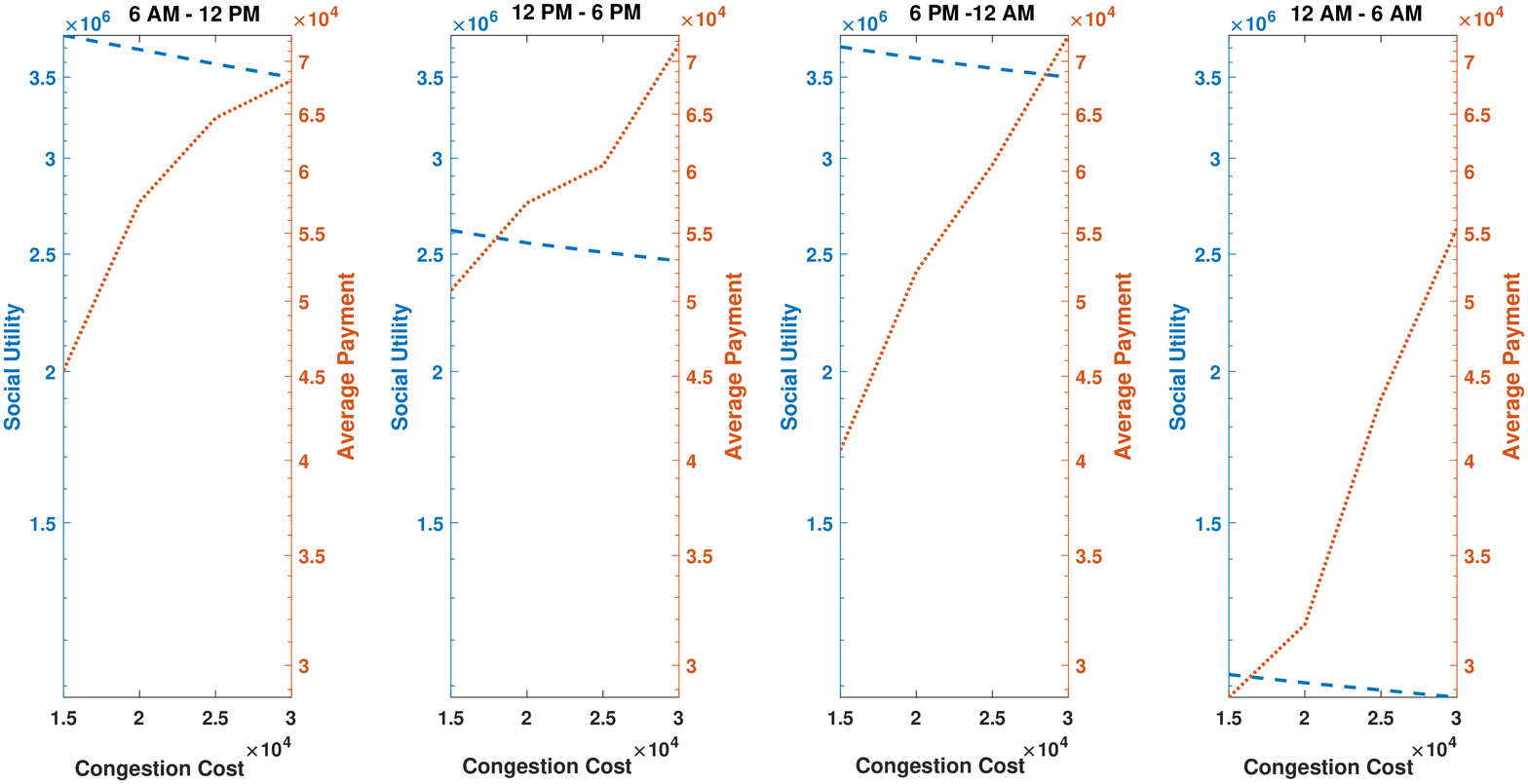}
      \caption{Social utility and average payment versus congestion cost: Chennai airport.}
    \label{fig:f4}
\end{figure}
 One of the main features of $\mechabbrv$ is the payment function that makes the mechanism truthful. In this section, we study the effect of congestion cost on resulting average payments of the movements that are allocated a slot. The payment for each movement is calculated by equation \ref{eq:payment}.
 \Cref{fig:f3,fig:f4} show that the trend of the average payments is upward, which is expected because a higher congestion cost introduces more competition among the movements for the slots. The social utility is also plotted in these figures for a reference of the scale of the payments. \Cref{fig:f3} and \ref{fig:f4} helps in depicting the increasing payment with congestion cost along with decreasing social utility, while \Cref{fig:f7,fig:f8} exhibit comparison of social utility across different mechanisms.

 \subsection{Remote city opportunity factor (RCOF) and its impact}
The flights to metro cities and financial hubs have a high valuation since the ticket prices and load factor is generally high, which gives them an advantage if the allocation is done solely based on valuation.  
The ticket fare and demand on remote routes is comparatively less due to their low purchasing power or due to cap imposed by government intervention. Hence, to provide equal opportunities to remote cities, we used the remote city opportunity factor (RCOF) given by \Cref{eq:rcof} in the slot allocation process. It assigns a higher value to cities with a low SPI and a considerable amount of population to have a benefit of air connectivity. Since the RCOF is multiplied with the valuations of the movements in the objective function of \Cref{eq:OPT1}, the chances of allocation of the remote cities in larger airports increase and at the same time, the factor $1 /\rho_i$ in the payment function leads to a lower payment for such flights.
 
We observe that the individual utilities of the flights to and from remote cities are comparable to the metro cities despite their low valuations (see \Cref{fig:f5,fig:f6}), both for the case of Delhi and Chennai airport. This happens because the average slot payment for metro cities is higher than non-metro state capitals, which in turn is higher than in remote cities. Hence, \mechabbrv\ is more egalitarian among the flights irrespective of the origin/destination.

From the regulator's point of view, $\mechabbrv$ provide a mechanism that offers opportunities to flights from remote cities and provide an incentive to airlines to operate in public service offering routes, which otherwise are neglected without regulatory intervention. On the other hand, between two cities with the same value of $\rho_i$ the city with high valuation will win the slot. 
Moreover, our mechanism is not totally unfavourable to metro cities as in our RCOF calculation, we have also incorporated the population of the cities. The metro cities with a higher population will have more slots. The policymaker can adjust the relative weight of SPI and population to decide how much importance they want to give for remote connectivity. 
Moreover, when a city makes progress on the economic front, its SPI value improves, leading to similar values of $\rho_i$ for most cities. This reduces the difference given by $\gamma_{max} - \gamma_{i}$ ($SPI_{max} - SPI_{i}$), thus reducing the preferential treatment of the city.  
In the long run, with the development of cities, the social progress gap improves and administrative approach may not be required. Therefore, the $\mechabbrv$ will evolve over time and moves in the direction of purely market-based mechanism.

As shown in \Cref{fig:f3} and \Cref{fig:f4}, an increase in congestion cost lead to more average payment by airlines. Higher payments essentially means more internalization of the congestion cost by the airlines, which earlier they imposed on society.  
This has positive benefits to society in terms of lower environmental and noise pollution along with less waiting time for the passengers. As shown, social utility decreases with an increase in congestion cost. The reason for the same is that we have only considered airline utilities in social utility calculations. Since airlines pay more by internalizing the cost of congestion, their utilities go down. However, with reduced congestion, airlines will be saving in terms of more utilization of aircrafts, crew members and less fuel burn. 

The social utility generated using $\mechabbrv$ is 20-30\% higher than IATA and Current allocation as shown in \Cref{table:1}. It is interesting to note that IATA allocation performs better then existing allocation by 10-15\%, while in turn our mechanism outperforms IATA. The properties of \mechabbrv\ help us in considering congestion cost and RCOF in the objective function and incentivize a truthful value revelation by the airlines. The payment mechanism also takes into account congestion cost and RCOF, thereby driving airline to operate in less congested slots and connect to remote cities. To the best of our knowledge, this is the first study to consider all the three goals and provide a truthful mechanism for airport slot allocation.
\begin{table}[!ht]
\caption{Percent improvement in social utility of \mechabbrv\ vis-a-vis Current and IATA based allocations.}
\centering
\begin{tabular}{ |p{1.2cm}|p{1.5cm}|p{1.8cm}|p{1.5cm}|p{1.8cm}|p{1.5cm}| } 
\hline
& & \multicolumn{2}{c|}{Chennai (MAA) Airport} &\multicolumn{2}{c|}{Delhi (DEL) Airport }\\
\hline
& & \multicolumn{4}{c|}{\% Improvement w.r.t.}\\
\hline
Time Interval & Congestion Cost &  Current allocation & IATA allocation & Current allocation & IATA allocation\\
\hline
\multirow{1}{1.2cm}{6AM - 12PM}  &	15000&	29.9\% &	4.6\%&	35.1\%&	6.2\% \\
   & 20000 &	28.4\%&	4.8\% &33.5\% &6.4\%\\
  & 25000&	26.6\%	&5.1\% &31.7\%	&6.4\%\\
  & 30000&	24.3\%	&5.6\% &30.4\%	&6.9\%\\
\hline
\multirow{1}{1.2cm}{12PM - 6PM}  & 15000&	29.9\%&	12.3\%&	43.2\%	&10.2\%\\
  &	20000&	29.1\%&	12.6\%&	42.4\%&	10.3\%\\
	& 25000&	28.4\%&	13.2\%&	41.1\%&	10.5\%\\
	& 30000&	27.0\%&	13.4\%&	39.6\%&	10.8\%\\
\hline
\multirow{1}{1.2cm}{6PM - 12AM} &	15000&	36.9\%&	11.4\%&	34.2\%&	7.1\%\\
	& 20000&	37.8\%&	11.5\%&	33.3\%&	7.3\%\\
&	25000&	38.9\%&	12.0\%&	31.9\%&	7.6\%\\
&	30000&	39.9\%&	12.8\%&	29.9\%&	8.0\%\\
\hline
\multirow{1}{1.2cm}{12AM - 6AM} &	15000&	28.7\%&	19.2\%&	32.1\%&	8.4\%\\
&	20000&	29.7\%	&19.6\%&	30.5\%&	8.7\%\\
&	25000&	31.0\%&	20.6\%&	28.5\%&	8.9\%\\
&	30000&	32.4\%&	21.6\%&	26.5\%&	9.2\%\\
\hline
\end{tabular}
\label{table:1}
\end{table}
\section{Conclusions}
\label{sec:concl}
The present study has examined the slot allocation process from a multi-objective viewpoint. We start with consideration of two major problems of the airline industry, namely congestion and regional connectivity. We combined the features of market-based and administrative instruments in a single dominant strategy incentive compatible mechanism with the goal of efficiency and social connectivity. The proposed mechanism provides a flexible and transparent approach for slot allocation, and the analytical results offer meaningful policy implications. The mechanism is solvable in polynomial time. The significant contributions of this study are as follows. 
\par First, the proposed mechanism provides a fine balance between three competing goals: efficiency, remote city connection, and congestion mitigation. A new model is proposed for optimizing the allocation of the slot based on the valuation of the movements and the congestion cost. The valuation of different movements is based on ticket prices and load factor. The model is efficient as it allocates slots to movements based on valuations. It also considers regional connectivity by incorporating the remote city opportunity factor. The weights of the population and social progress index of the cities can be flexibly changed based on the varying goals of the policymakers. In addition, the relationship between airport capacity and congestion is examined using variable capacity. It captured the phenomena of congestion when the capacity used exceeds its limits. The ‘‘threshold limit” of airport capacity can be adjusted for different airports as it depends on factors such as types of aircraft and technology used at airports. The “threshold limit” could serve as an upper limit for the non-congested capacity of the airport. Our mechanism brings in flexibility where:
\begin{itemize}
    \item Congestion and its associated cost can be tuned based on the historical congestion data of the airport.
    \item The “threshold limit” of capacity can be adjusted for different airports.
    \item The mechanism can be tweaked according to the objective of airport/priority of the policymaker, as it provides adjustable weights of social progress index and population of the remote city, which caters to social obligations for slot allocation.
    \end{itemize}
The second contribution of this study comes from the methodology used in solving the proposed formulation. Solving the objective function for maximizing social welfare using the affine maximizer is an efficient and effective approach of mechanism design. The affine maximizer mechanisms are strategy-proof and individually rational (the agents’ valuations for the chosen allocations are nonnegative). Besides, our mechanism is incentive compatible in dominant strategies. None of the players can get better utility by misreporting their true valuation for the slots. The formulation is polynomial solvable as LP relaxation yields an optimal solution in integers in polynomial time.

The third contribution of this study is its payment rule. The payment rule is designed in such a way that it captures the externality imposed by a movement on the others. It is calculated as the difference between the social welfare generated without existence of the movement and the sum of weighted valuations of all the other airlines if it participates in the allocation. The resulting payment captures two properties:
\begin{itemize}
    \item The contribution of particular movement to congestion and
    \item The type of connectivity provided by it (Metro city, non-metro capital city, or small city). 
\end{itemize}
The mechanism is designed in such a way that it assigns low payment for remote city connections and provides an incentive for airlines to operate on such routes. Moreover, the high payment value for trunk routes discourages airlines from increasing frequency only on such routes. The mechanism also considers the congestion externality imposed by a movement and charge fee based on the contribution of movement. Hence, the payment mechanism emulates the properties of Pigouvian tax, which is used to correct inefficient market by imposing a fee equal to the marginal cost of the negative externality. The proposed payment also drives our mechanism to be incentive-compatible and charge a higher cost to movements with negative externalities.

\subsection*{Limitations and Future Research Directions}
Like any other research, our study also has several limitations. Future studies can extend the body of literature on slot allocation by incorporating these limitations into their models. Firstly, our mechanism is not budget balanced. Future studies may try to achieve together both efficiency and budget balance criteria. We have considered a single airport for slot allocation.  Future studies may consider multiple airports for slot allocations simultaneously. Congestion in a single slot has a cascading effect on subsequent movements. Future studies may consider rolling capacity constraints for different slots. In our remote city connectivity factor, we have considered the social progress index and population of the cities. Future studies may also incorporate the availability of other modes of transportation such as rail, road, and water to these cities.

\bibliographystyle{plainnat}
\bibliography{abb,swaprava,master,ultimate}

\begin{thebibliography}{47}
\providecommand{\natexlab}[1]{#1}
\providecommand{\url}[1]{\texttt{#1}}
\expandafter\ifx\csname urlstyle\endcsname\relax
  \providecommand{\doi}[1]{doi: #1}\else
  \providecommand{\doi}{doi: \begingroup \urlstyle{rm}\Url}\fi

\bibitem[Abdulkadiro{\u{g}}lu and S{\"o}nmez(2003)]{abdulkadirouglu2003school}
Atila Abdulkadiro{\u{g}}lu and Tayfun S{\"o}nmez.
\newblock School choice: A mechanism design approach.
\newblock \emph{American economic review}, 93\penalty0 (3):\penalty0 729--747,
  2003.

\bibitem[Alderighi et~al.(2017)Alderighi, Gaggero, and
  Piga]{alderighi2017hidden}
Marco Alderighi, Alberto~A Gaggero, and Claudio~A Piga.
\newblock The hidden side of dynamic pricing: Evidence from the airline market.
\newblock \emph{Available at SSRN 3085585}, 2017.

\bibitem[Androutsopoulos et~al.(2020)Androutsopoulos, Manousakis, and
  Madas]{androutsopoulos2020modeling}
Konstantinos~N Androutsopoulos, Eleftherios~G Manousakis, and Michael~A Madas.
\newblock Modeling and solving a bi-objective airport slot scheduling problem.
\newblock \emph{European Journal of Operational Research}, 284\penalty0
  (1):\penalty0 135--151, 2020.

\bibitem[Anstee(1987)]{anstee1987polynomial}
Richard~P Anstee.
\newblock A polynomial algorithm for b-matchings: an alternative approach.
\newblock \emph{Information Processing Letters}, 24\penalty0 (3):\penalty0
  153--157, 1987.

\bibitem[Ball et~al.(2006)Ball, Donohue, and Hoffman]{ball2006auctions}
Michael Ball, George Donohue, and Karla Hoffman.
\newblock Auctions for the safe, efficient, and equitable allocation of
  airspace system resources.
\newblock \emph{Combinatorial auctions}, 1, 2006.

\bibitem[Ball et~al.(2020)Ball, Estes, Hansen, and Liu]{ball2020quantity}
Michael~O Ball, Alexander~S Estes, Mark Hansen, and Yulin Liu.
\newblock Quantity-contingent auctions and allocation of airport slots.
\newblock \emph{Transportation Science}, 54\penalty0 (4):\penalty0 858--881,
  2020.

\bibitem[Barnhart and Cohn(2004)]{barnhart2004airline}
Cynthia Barnhart and Amy Cohn.
\newblock Airline schedule planning: Accomplishments and opportunities.
\newblock \emph{Manufacturing \& service operations management}, 6\penalty0
  (1):\penalty0 3--22, 2004.

\bibitem[Basso and Zhang(2010)]{basso2010pricing}
Leonardo~J Basso and Anming Zhang.
\newblock Pricing vs. slot policies when airport profits matter.
\newblock \emph{Transportation Research Part B: Methodological}, 44\penalty0
  (3):\penalty0 381--391, 2010.

\bibitem[Benell and Prentice(1993)]{benell1993regression}
Dave~W Benell and Barry~E Prentice.
\newblock A regression model for predicting the economic impacts of canadian
  airports.
\newblock \emph{Logistics and Transportation Review}, 29\penalty0 (2):\penalty0
  139, 1993.

\bibitem[Bilotkach et~al.(2015)Bilotkach, Gaggero, and
  Piga]{bilotkach2015airline}
Volodymyr Bilotkach, Alberto~A Gaggero, and Claudio~A Piga.
\newblock Airline pricing under different market conditions: Evidence from
  european low-cost carriers.
\newblock \emph{Tourism Management}, 47:\penalty0 152--163, 2015.

\bibitem[Brafman and Tennenholtz(2003)]{Brafman2003}
R.I. Brafman and M.~Tennenholtz.
\newblock R-max-a general polynomial time algorithm for near-optimal
  reinforcement learning.
\newblock \emph{The Journal of Machine Learning Research}, 3:\penalty0
  213--231, 2003.

\bibitem[Brueckner(2003)]{brueckner2003airline}
Jan~K Brueckner.
\newblock Airline traffic and urban economic development.
\newblock \emph{Urban Studies}, 40\penalty0 (8):\penalty0 1455--1469, 2003.

\bibitem[Brueckner(2009)]{brueckner2009price}
Jan~K Brueckner.
\newblock Price vs. quantity-based approaches to airport congestion management.
\newblock \emph{Journal of Public Economics}, 93\penalty0 (5-6):\penalty0
  681--690, 2009.

\bibitem[Burghouwt(2017)]{burghouwt2017influencing}
Guillaume Burghouwt.
\newblock Influencing air connectivity outcomes.
\newblock International Transport Forum Discussion Paper, 2017.

\bibitem[Cao and Kanafani(2000)]{cao2000value}
Jia-Ming Cao and Adib Kanafani.
\newblock The value of runway time slots for airlines.
\newblock \emph{European Journal of Operational Research}, 126\penalty0
  (3):\penalty0 491--500, 2000.

\bibitem[Carlin and Park(1970)]{carlin1970marginal}
Alan Carlin and Rolla~Edward Park.
\newblock Marginal cost pricing of airport runway capacity.
\newblock \emph{The American Economic Review}, pages 310--319, 1970.

\bibitem[Castelli et~al.(2012)Castelli, Pellegrini, and
  Pesenti]{castelli2012airport}
Lorenzo Castelli, Paola Pellegrini, and Raffaele Pesenti.
\newblock Airport slot allocation in europe: economic efficiency and fairness.
\newblock \emph{International journal of revenue management}, 6\penalty0
  (1-2):\penalty0 28--44, 2012.

\bibitem[Czerny(2010)]{czerny2010airport}
Achim~I Czerny.
\newblock Airport congestion management under uncertainty.
\newblock \emph{Transportation Research Part B: Methodological}, 44\penalty0
  (3):\penalty0 371--380, 2010.

\bibitem[Czerny and Zhang(2014)]{czerny2014airport}
Achim~I Czerny and Anming Zhang.
\newblock Airport congestion pricing when airlines price discriminate.
\newblock \emph{Transportation Research Part B: Methodological}, 65:\penalty0
  77--89, 2014.

\bibitem[Daniel(2011)]{daniel2011congestion}
Joseph~I Daniel.
\newblock Congestion pricing of canadian airports.
\newblock \emph{Canadian Journal of Economics/Revue canadienne
  d'{\'e}conomique}, 44\penalty0 (1):\penalty0 290--324, 2011.

\bibitem[Daniel and Harback(2009)]{daniel2009pricing}
Joseph~I Daniel and Katherine~Thomas Harback.
\newblock Pricing the major us hub airports.
\newblock \emph{Journal of Urban Economics}, 66\penalty0 (1):\penalty0 33--56,
  2009.

\bibitem[de~Werra(1981)]{de1981some}
Dominique de~Werra.
\newblock On some characterisations of totally unimodular matrices.
\newblock \emph{Mathematical Programming}, 20\penalty0 (1):\penalty0 14--21,
  1981.

\bibitem[Deshpande and Ar{\i}kan(2012)]{deshpande2012impact}
Vinayak Deshpande and Mazhar Ar{\i}kan.
\newblock The impact of airline flight schedules on flight delays.
\newblock \emph{Manufacturing \& Service Operations Management}, 14\penalty0
  (3):\penalty0 423--440, 2012.

\bibitem[Fageda et~al.(2018)Fageda, Su{\'a}rez-Alem{\'a}n, Serebrisky, and
  Fioravanti]{fageda2018air}
Xavier Fageda, Ancor Su{\'a}rez-Alem{\'a}n, Tomas Serebrisky, and Reinaldo
  Fioravanti.
\newblock Air connectivity in remote regions: A comprehensive review of
  existing transport policies worldwide.
\newblock \emph{Journal of Air Transport Management}, 66:\penalty0 65--75,
  2018.

\bibitem[Fan and Odoni(2002)]{fan2002practical}
Terence~P Fan and Amedeo~R Odoni.
\newblock A practical perspective on airport demand management.
\newblock \emph{Air Traffic Control Quarterly}, 10\penalty0 (3):\penalty0
  285--306, 2002.

\bibitem[Ghouila-Houri(1962)]{ghouila1962caracterisation}
Alain Ghouila-Houri.
\newblock Caract{\'e}risation des matrices totalement unimodulaires.
\newblock \emph{Comptes Redus Hebdomadaires des S{\'e}ances de l'Acad{\'e}mie
  des Sciences (Paris)}, 254:\penalty0 1192--1194, 1962.

\bibitem[Green(2007)]{green2007airports}
Richard~K Green.
\newblock Airports and economic development.
\newblock \emph{Real estate economics}, 35\penalty0 (1):\penalty0 91--112,
  2007.

\bibitem[Gr{\"o}tschel et~al.(1993)Gr{\"o}tschel, Lov{\'a}sz, and
  Schrijver]{grotschel1993complexity}
Martin Gr{\"o}tschel, L{\'a}szl{\'o} Lov{\'a}sz, and Alexander Schrijver.
\newblock Complexity, oracles, and numerical computation.
\newblock In \emph{Geometric Algorithms and Combinatorial Optimization}, pages
  21--45. Springer, 1993.

\bibitem[Harsha(2009)]{harsha2009mitigating}
Pavithra Harsha.
\newblock \emph{Mitigating airport congestion: market mechanisms and airline
  response models}.
\newblock PhD thesis, Massachusetts Institute of Technology, 2009.

\bibitem[Horton(2020)]{horton_2020}
Will Horton.
\newblock British airways cuts threaten crown jewel of slots at london heathrow
  and gatwick.
\newblock \emph{The Forbes}, May 2020.
\newblock URL
  \url{https://www.forbes.com/sites/willhorton1/2020/05/04/british-airways-cuts-threaten-crown-jewel-of-slots-at-london-heathrow-and-gatwick/#359a052b7cdf}.

\bibitem[IATA(2017)]{IATA}
IATA.
\newblock Worldwide slot guidelines, eighth ed.
\newblock \emph{IATA}, 2017.
\newblock URL
  \url{https://www.iata.org/policy/slots/Documents/wsg-8-english.pdf}.

\bibitem[Li et~al.(2010)Li, Lam, Wong, and Fu]{li2010optimal}
Zhi-Chun Li, William~HK Lam, SC~Wong, and Xiaowen Fu.
\newblock Optimal route allocation in a liberalizing airline market.
\newblock \emph{Transportation Research Part B: Methodological}, 44\penalty0
  (7):\penalty0 886--902, 2010.

\bibitem[Mehta and Vazirani(2020)]{mehta2020incentive}
Ruta Mehta and Vijay~V Vazirani.
\newblock An incentive compatible, efficient market for air traffic flow
  management.
\newblock \emph{Theoretical Computer Science}, 818:\penalty0 41--50, 2020.

\bibitem[Ribeiro et~al.(2018)Ribeiro, Jacquillat, Antunes, Odoni, and
  Pita]{ribeiro2018optimization}
Nuno~Antunes Ribeiro, Alexandre Jacquillat, Ant{\'o}nio~Pais Antunes, Amedeo~R
  Odoni, and Jo{\~a}o~P Pita.
\newblock An optimization approach for airport slot allocation under iata
  guidelines.
\newblock \emph{Transportation Research Part B: Methodological}, 112:\penalty0
  132--156, 2018.

\bibitem[Roberts(1979)]{roberts1979characterization}
Kevin Roberts.
\newblock The characterization of implementable choice rules.
\newblock \emph{Aggregation and revelation of preferences}, 12\penalty0
  (2):\penalty0 321--348, 1979.

\bibitem[Sheard(2014)]{sheard2014airports}
Nicholas Sheard.
\newblock Airports and urban sectoral employment.
\newblock \emph{Journal of Urban Economics}, 80:\penalty0 133--152, 2014.

\bibitem[Sheng et~al.(2019)Sheng, Li, and Fu]{sheng2019modeling}
Dian Sheng, Zhi-Chun Li, and Xiaowen Fu.
\newblock Modeling the effects of airline slot hoarding behavior under the
  grandfather rights with use-it-or-lose-it rule.
\newblock \emph{Transportation Research Part E: Logistics and Transportation
  Review}, 122:\penalty0 48--61, 2019.

\bibitem[Shoham and Leyton-Brown(2008)]{SL08}
Y.~Shoham and K.~Leyton-Brown.
\newblock \emph{Multiagent Systems: Algorithmic, Game-Theoretic, and Logical
  Foundations}.
\newblock Cambridge University Press, 2008.

\bibitem[Sieg(2010)]{sieg2010grandfather}
Gernot Sieg.
\newblock Grandfather rights in the market for airport slots.
\newblock \emph{Transportation Research Part B: Methodological}, 44\penalty0
  (1):\penalty0 29--37, 2010.

\bibitem[Swaroop et~al.(2012)Swaroop, Zou, Ball, and Hansen]{swaroop2012more}
Prem Swaroop, Bo~Zou, Michael~O Ball, and Mark Hansen.
\newblock Do more us airports need slot controls? a welfare based approach to
  determine slot levels.
\newblock \emph{Transportation Research Part B: Methodological}, 46\penalty0
  (9):\penalty0 1239--1259, 2012.

\bibitem[Tamir and Mitchell(1998)]{tamir1998maximumb}
Arie Tamir and Joseph~SB Mitchell.
\newblock A maximumb-matching problem arising from median location models with
  applications to the roommates problem.
\newblock \emph{Mathematical Programming}, 80\penalty0 (2):\penalty0 171--194,
  1998.

\bibitem[Vaze and Barnhart(2012)]{vaze2012modeling}
Vikrant Vaze and Cynthia Barnhart.
\newblock Modeling airline frequency competition for airport congestion
  mitigation.
\newblock \emph{Transportation Science}, 46\penalty0 (4):\penalty0 512--535,
  2012.

\bibitem[Vickrey(1969)]{vickrey1969congestion}
William~S Vickrey.
\newblock Congestion theory and transport investment.
\newblock \emph{The American Economic Review}, 59\penalty0 (2):\penalty0
  251--260, 1969.

\bibitem[Yao and Yang(2008)]{yao2008airport}
Shujie Yao and Xiuyun Yang.
\newblock Airport development and regional economic growth in china.
\newblock \emph{Available at SSRN 1101574}, 2008.

\bibitem[Zhang et~al.(2020)Zhang, Atasu, Ayer, and Toktay]{zhang2020truthful}
Can Zhang, Atalay Atasu, Turgay Ayer, and L~Beril Toktay.
\newblock Truthful mechanisms for medical surplus product allocation.
\newblock \emph{Manufacturing \& Service Operations Management}, 22\penalty0
  (4):\penalty0 735--753, 2020.

\bibitem[Zografos and Madas(2003)]{zografos2003critical}
Konstantinos~G Zografos and Michael~A Madas.
\newblock Critical assessment of airport demand management strategies in europe
  and the united states: Comparative perspective.
\newblock \emph{Transportation research record}, 1850\penalty0 (1):\penalty0
  41--48, 2003.

\bibitem[Zografos et~al.(2013)Zografos, Madas, and
  Salouras]{zografos2013decision}
Konstantinos~G Zografos, Michael~A Madas, and Yiannis Salouras.
\newblock A decision support system for total airport operations management and
  planning.
\newblock \emph{Journal of Advanced Transportation}, 47\penalty0 (2):\penalty0
  170--189, 2013.

\end{thebibliography}

\appendix
\section*{Appendix}

\begin{proof}[Proof of \Cref{thm:DSIC}]
  In terms of utilities, the theorem says,
\begin{equation}
    u_i(f(v_i,v_{-i}),V ) \geq u_i(f(v'_{i},v_{-i}),V ),\ \ \  \forall v_i, v'_{i} \in \mathbb{R}^{|n|} \ \ \forall i \in M
\end{equation}
Let us assume for the contradiction that, there exist a movement $i$ for which the corresponding airline has true valuations for the slots as, $v_{i}$, but misreports it as $v'_{i}$(the corresponding value function is $v'_{i}$), and gets better utility,
\begin{equation}
    u_i(f(v'_{i},v_{-i}),V) >
u_i(f(v_i,v_{-i}),V)
\label{eq:Eq2DSIC}
\end{equation}
Suppose $\mathcal{A}(v'_{i},v_{-i}) = x'$ and $\mathcal{A}({v}_{i},v_{-i}) = x^{*}$. By definition of the utility function, 
\begin{equation*}
    u_i( x^{*},V) = v_i(x^{*})- p_i({v}_{i},v_{-i})
\end{equation*}
\begin{equation*}
    = v_i(x^{*})- \frac{1}{\rho_i } \Big( h_i(v_{-i}) - \Big(\sum\limits_{k \in M \setminus \{i\}} \rho_k\ v_k\big(\ x^{*} \ \big) - g\sum\limits_{j \in S}\ e_j(x^{*}) \Big)  \Big)
\end{equation*}
\begin{equation*}
    = v_i(x^{*})+ \frac{1}{\rho_i } \Big(\sum\limits_{k \in M \setminus \{i\}} \rho_k\ v_k\big(\ x^{*} \ \big) - g\sum\limits_{j \in S}\ e_j(x^{*}) \Big)
    - \frac{1}{\rho_i } \Big( h_i(v_{-i}) \Big)
\end{equation*}
\begin{equation}
    =  \frac{1}{\rho_i } \Big(\sum\limits_{k \in M } \rho_k\ v_k\big(\ x^{*} \ \big) - g\sum\limits_{j \in S}\ e_j(x^{*}) \Big)
    - \frac{1}{\rho_i } \Big( h_i(v_{-i}) \Big)
\end{equation}
and,
\begin{equation*}
    u_i( x',({v}_{i},v_{-i})) = v_i(x')- p_i(v'_{i},v_{-i})
\end{equation*}
\begin{equation}
    =  \frac{1}{\rho_i } \Big(\sum\limits_{k \in M } \rho_k\ v_k\big(\ x' \ \big) - g\sum\limits_{j \in S}\ e_j(x') \Big)
    - \frac{1}{\rho_i } \Big( h_i(v_{-i}) \Big)
\end{equation}
From inequality \ref{eq:Eq2DSIC}, 
\begin{align*}
    \frac{1}{\rho_i } \Big(\sum\limits_{k \in M } \rho_k\ v_k\big(\ x' \ \big) - g\sum\limits_{j \in S}\ e_j(x') \Big)
    - & \frac{1}{\rho_i } \Big( h_i(v_{-i}) \Big) \\ \ >\  \frac{1}{\rho_i } \Big(\sum\limits_{k \in M } \rho_k\ v_k\big(\ x^{*} \ \big) - & g\sum\limits_{j \in S}\ e_j(x^{*}) \Big)
    - \frac{1}{\rho_i } \Big( h_i(v_{-i}) \Big)
\end{align*}
\begin{equation*}
     \sum\limits_{k \in M } \rho_k\ v_k\big(\ x' \ \big) - g\sum\limits_{j \in S}\ e_j(x') 
     \ >\  \sum\limits_{k \in M } \rho_k\ v_k\big(\ x^{*} \ \big) - g\sum\limits_{j \in S}\ e_j(x^{*}) 
\end{equation*}
The above inequality leads to the contradiction with the fact that $x^{*}$ is the socially optimal allocation. And therefore, the best strategy for airlines is to report the valuations of their movements truthfully.
\end{proof}

\begin{proof}[Proof of \Cref{thm:IR}]
 Consider the allocation given by $\mathcal{A}$ as $x^*$. The utility of the airline $i$ is, $ u_i((x^*,p(V)),V)= v_i (x^*)- p_i(V)$ $\forall i \in M$. $v_i (\mathcal{A}(V)) \geq 0$, as it is the valuation by the allocation of slots. If $\rho_i =0$ then $u_i((x^*,p(V)),V)\geq 0$. For the case when $\rho_i >0$, the proof is as follows:  by expanding the expression for $p$ in  the definition of utility, we get
 \begin{equation*}
    u_i((x^*,p(V)),V)= v_i (x^*)- \frac{1}{\rho_i } \Big( h_i(v_{-i}) - \Big(\sum\limits_{k \in M \setminus \{i\}} \rho_k\ v_k(x^*) -g\sum\limits_{j \in S}e_j(x^*) \Big)  \Big)
\end{equation*}
\begin{equation*}
   = v_i (x^*)- \frac{1}{\rho_i } \Big( h_i(v_{-i})\Big) + \frac{1}{\rho_i }\Big( \sum\limits_{k \in M \setminus \{i\}} \rho_k\ v_k(\ x^* \ ) - g\sum\limits_{j \in S}\ e_j(\ x^*\ ) \Big) 
\end{equation*}
\begin{align*}
   = \frac{1}{\rho_i }\Big( \sum\limits_{k \in M } \rho_k\ v_k(\ x^* \ ) - g\sum\limits_{j \in S}\ e_j(\ x^*\ ) \Big) & \\ - \frac{1}{\rho_i } \Big( \sum\limits_{k \in M \setminus \{i\}} \rho_k\ v_k\big(\mathcal{A}(v_{-i}) \big) & - g\sum\limits_{j \in S}\ e_j(\mathcal{A}(v_{-i})\ )\Big)
\end{align*}
By adding and subtracting $\frac{1}{\rho_i }(v_i(\mathcal{A}(v_{-i})))$,
\begin{align*}
   = \frac{1}{\rho_i }\Big( \sum\limits_{k \in M } \rho_k\ v_k(\ x^* \ ) - g\sum\limits_{j \in S}\ e_j(\ x^*\ ) \Big) & \\ - \frac{1}{\rho_i } \Big( \sum\limits_{k \in M } \rho_k\ v_k\big(\mathcal{A}(v_{-i}) \big) & - g\sum\limits_{j \in S}\ e_j(\mathcal{A}(v_{-i}) )\Big)+ \frac{1}{\rho_i }(v_i(\mathcal{A}(v_{-i}))) 
\end{align*}

\begin{align*}
   = \underbrace{ \Big( \sum\limits_{k \in M } \rho_k\ v_k(x^*) - g\sum\limits_{j \in S}e_j(x^*) \Big) - \Big( \sum\limits_{k \in M } \rho_k\ v_k\big(\mathcal{A}(v_{-i}) \big) - g\sum\limits_{j \in S}e_j(\mathcal{A}(v_{-i}) )\Big)}_{\geq 0} & \\ + \underbrace{v_i(\mathcal{A}(v_{-i})) }_{\geq 0} &
\end{align*}
The difference of first two terms is non-negative by the definition of $x^*$, the socially optimal allocation, and the third term is non-negative as it is the valuation by the allocation of slots, which proves that $\mechabbrv$ is individually rational.
\end{proof}

\end{document}